\documentclass[letterpaper,10pt,journal,final,onecolumn]{IEEEtran}  
 
 % !TeX spellcheck=en_GB
 
\usepackage{amssymb,amsmath,amsthm}

\usepackage[noadjust]{cite}
\usepackage{caption}
\usepackage{subcaption}
\usepackage{float,tikz,xcolor}
\usetikzlibrary{arrows,positioning,calc}
\usetikzlibrary{arrows.meta}

\definecolor{dblue}{rgb}{0,0,.5}

% theorems
\newtheorem{assumption}{Assumption}
\newtheorem{lemma}{Lemma}
\newtheorem{proposition}{Proposition}
\newtheorem{remark}{Remark}
\newtheorem{theorem}{Theorem}

\newcommand{\QED}{$\blacksquare$}
 \def\endproof{\hspace*{\fill}~\QED\par\unskip}

%set definitions
\newcommand{\R}{\mathbb R}
\newcommand{\C}{\mathbb C}

\newcommand{\N}{\mathbb N}

\newcommand{\Rplus}{\R_{\ge 0}}

\newcommand{\B}{\mathbb{B}}

\newcommand{\cA}{\mathcal A}

\newcommand{\cC}{\mathcal C}

\newcommand{\cK}{\mathcal K}
\newcommand{\cL}{\mathcal L}
\newcommand{\cM}{\mathcal M}

\newcommand{\cO}{\mathcal O}
\newcommand{\cP}{\mathcal P}

\newcommand{\cR}{\mathcal R}

%shortcuts
\newcommand{\x}{\times}

\newcommand{\und}{\underline}
\newcommand{\inv}{^{-1}}

% operators

\DeclareMathOperator{\col}{col}

\DeclareMathOperator{\diag}{diag}

\newcommand{\ball}{\B}
% misc

\allowdisplaybreaks

\newcommand{\sr}{^{\star}}

\newcommand{\dw}{{n_w}}
\newcommand{\dx}{{n_x}}
\newcommand{\dy}{{n_y}}
\newcommand{\de}{{n_e}}

\newcommand{\du}{{n_u}}
\newcommand{\dya}{{n_a}}

\newcommand{\deta}{{d \de}}

\newcommand{\xb}{\mathbf{x}}
\newcommand{\pb}{\mathbf{p}}

\newcommand{\suchthat}{\,:\,}

\newcommand{\sn}{_{[0,t)}}

%\pagestyle{empty}
%\pagenumbering{gobble}
\title{
Output Regulation by Postprocessing Internal Models for a Class of Multivariable Nonlinear Systems
}
\author{Michelangelo~Bin~and~Lorenzo Marconi}%%

\raggedbottom
\begin{document}

  \onecolumn 
  \vspace{4em}

  \vspace{5em}
  \begin{quote}
  	\emph{This is the peer reviewed version of the following article:\emph{ M. Bin and L. Marconi,  Output regulation by postprocessing internal models for a class of multivariable nonlinear systems, Int J Robust Nonlinear Control, vol. 30, pp. 1115-1140, 2020}, which has been published in final form at 	https://doi.org/10.1002/rnc.4811. This article may be used for non-commercial purposes in accordance with Wiley Terms and Conditions for Use of Self-Archived Versions. This article may not be enhanced, enriched or otherwise transformed into a derivative work, without express permission from Wiley or by statutory rights under applicable legislation. Copyright notices must not be removed, obscured or modified. The article must be linked to Wiley’s version of record on Wiley Online Library and any embedding, framing or otherwise making available the article or pages thereof by third parties from platforms, services and websites other than Wiley Online Library must be prohibited.}
  \end{quote}

 \pagestyle{empty}

 \clearpage
 \pagestyle{plain}
 \pagenumbering{arabic} 

\maketitle

\begin{abstract}
In this paper we propose a new design paradigm, which employing a postprocessing internal model unit, to approach the problem of  output regulation for a class of multivariable minimum-phase nonlinear systems possessing a partial normal form. Contrary to previous approaches, the proposed regulator  handles control inputs of  dimension larger than the number of regulated variables, provided that a controllability assumption holds, and can employ additional   measurements that need not to vanish at the ideal error-zeroing steady state, but that can be useful for stabilization purposes  or to fulfil the minimum-phase requirement. Conditions for practical and asymptotic output regulation are given, underlying how in postprocessing schemes the design of internal models  is necessarily intertwined with that of the stabilizer.
\end{abstract}

\section{Introduction}
Output regulation is the branch of control theory studying the design of control systems making the plant follow some desired reference trajectories while rejecting at the same time unknown disturbances. The output regulation problem for linear systems was elegantly solved in the mid 70s in the seminal works of Francis, Wonham and Davison \cite{Francis1976,Davison1976} under the assumption that   all the exogenous signals, i.e. references and disturbances, are generated by a known autonomous linear process, called the \textit{exosystem}. The key result presented in those works was that a necessary \cite{Francis1976} and sufficient \cite{Davison1976} condition for a regulator to solve the linear output regulation problem \textit{robustly} (i.e. despite model uncertainties), is that the regulator must suitably embed, in the control loop, an \textit{internal model} of the exosystem.
While the concept of internal model led to a definite answer to the linear regulation problem, the situation is quite different when nonlinear systems are concerned \cite{Byrnes2003}, and nonlinear output regulation is still a quite open problem \cite{Bin2018b}. Without restrictive \textit{immersion} assumptions \cite{Huang1994,Byrnes1997,Huang2001,Huang2004,Isidori2012}, indeed, the knowledge of the exosystem alone is neither sufficient nor necessary for the solvability of the problem \cite{Byrnes2003,Bin2018b,Marconi2007}, and its role in conditioning the asymptotic behavior of the regulator mixes up with the plant's residual dynamics, thus making the celebrated robustness property of the linear regulator hard to imagine in a general nonlinear context \cite{Bin2018c}.

Under a control design perspective, nonlinear output regulation has reached a mature state, and many regulators have been proposed in the last decades  \cite{Byrnes1997,Huang2004,Marconi2007,Byrnes2004,Chen2005,IsidoriBookNew}. Nevertheless,  the existing approaches that can guarantee an asymptotically exact regulation mostly remain  limited   to minimum-phase single-input-single-output (partial) normal forms, and their immediate  square  multivariable extensions \cite{McGregor2006,Astolfi2013,Wang2016,Wang2017} (i.e. having the same number of inputs and regulation errors), where the only plant's measurements exploitable by the regulator are the \textit{regulation errors} themselves. The source of such limitation was recently sought in the common structure shared by the largest part of the nonlinear designs, which is somewhat complementary to those possessed by the original linear regulator of Davison~\cite{Davison1976}, and which induces some conceptual problems when extensions are sought. 
The linear regulator of \cite{Davison1976} is obtained by first augmenting the plant with a properly defined \textit{internal model unit},   which is  driven by the regulation errors, and then by closing the loop by means of a \textit{stabilizer}, which employs all the available measurements to stabilize the resulting closed-loop system. Since the internal model directly processes the regulation errors, this design paradigm is called \textit{postprocessing}. Most of the nonlinear approaches, instead, show a complementary design paradigm, where the internal model unit is driven by the control input produced by the stabilizer. A block-diagram representation of the two control structures is depicted in Figure \ref{fig:prepost}.
\begin{figure*}[t]
	\centering
		\tikzstyle{noshape} = [inner sep=0]
	\tikzstyle{sys} = [draw,inner sep=8]
	\tikzstyle{sum} = [circle,draw,inner sep=1]
	\tikzstyle{line} = [draw, -latex']
	\begin{tikzpicture}
	\node[sys] (plant) at (0,0) {Plant};
	\node[noshape] at (-5em,2em) {(a)};
	\node[sys] (intmod) [right=4em of plant] {Int. Model};
	\node[sys] (v) [below=2em of plant] {Stabilizer};
	\path[line]  (plant.east) -- (intmod.west) node[above,pos=.7] {$e$} ;
	\path[line] ($(plant.east)!.5!(intmod.west)$) |- ($(v.east)+(0,0.2)$) node[right,pos=.3] {$y$};
	\path[line] (v.west)--($(v.west)+(-1em,0)$) |- (plant.west) node[midway,left,pos=.25] {$u$};
	\path[line] (intmod.east)  -- ($(intmod.east)+(1em,0)$) |- ($(v.east)+(0,-0.2)$) node[right,midway,pos=.3] {$\eta$};

	\node[sys] (intmod2) at (21em,0) {Int. Model};
	\node[noshape] at (16em,2em) {(b)};
	\node[sum] (plus2) [right=2em of intmod2.east] {$\ \ \ $};
	\node[sys] (plant2) [right=2em of plus2] {Plant};
	\node[sys] (v2) [below=2em of plant2] {Stabilizer};
	\path[line]  (intmod2.east) -- (plus2.west) node [above,midway] {$\eta$};
	\path[line] (plus2.east) -- (plant2.west) node[above,midway] {$u$};
	\path[line] (plant2.east) |-   ($(plant2.east)+(2em,0)$) node [above,xshift=-10pt] {$e$} |- (v2.east) ;
	\path[line] (v2.west) node [above,xshift=-10pt] {$v$} -|  (plus2.south);
	\path[line] (v2.west) -| ($(intmod2.west)+(-1em,0)$) -- (intmod2.west);
	\end{tikzpicture} 
	\caption{(a) Postprocessing  and  (b) Preprocessing Internal Models.}
	\label{fig:prepost}
\end{figure*}
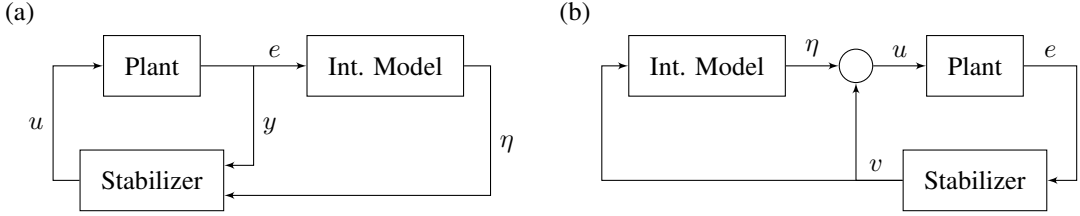
A part from their structural differences, pre and postprocessing schemes
also differ in terms of ``design philosophy'': in the postprocessing paradigm,  the plant is augmented with the internal model unit, the stabilizer is designed to stabilize the resulting cascade, so as to guarantee that the closed-loop system has a well-defined steady state (not fixed in advance) and, finally, the properties of the steady-state regulation error are inferred by the structure of the internal model. In preprocessing regulators,
instead, the internal model unit is designed in advance to be able to generate the ideal steady-state control action needed to keep the regulation errors to zero (previously computed by exploiting the known structure of the plant), and the stabilizer is then designed to ensure the asymptotic stability of such ideal steady state. Thus, while in preprocessing schemes the ideal steady states of the internal model unit and of the stabilizer are fixed a priori on the basis of the plant's data, in postprocessing
regulators  the ideal steady state for the internal model unit
and for the stabilizer cannot be fixed a priori. As a matter of fact, since the state of the internal model unit is used for stabilization purposes, and  it is thus processed by the stabilizer, its ideal steady state is strongly dependent on the choice of the stabilizer itself.

The preprocessing schemes have the interesting property that the roles of the
internal model unit and the stabilizer are neatly separated, and the ideal steady state of the closed-loop system is given by the problem data. This larger conceptual simplicity is, perhaps, the reason why most of the existing designs are of this kind. Nevertheless, preprocessing regulators have some
structural limitations that prevent their extension to larger classes of systems of those mentioned before. In particular, it is not clear,
at a conceptual level, how a preprocessing regulator could handle in a systematic way additional measured outputs that are necessary to obtain closed-loop stability (or even minimum-phase), but that need not to vanish at the ideal error-zeroing steady state, unless filtering them out at the steady state employing \textit{redundant} internal models, as recently proposed by Wang et al.~\cite{Wang2018}.  On the other hand, there is not even a clear road map on how to handle systems having more inputs than errors. If more inputs than errors are present, indeed, a preprocessing solution would lead to the employment of a number of internal models equal to the input dimension, thus yielding a redundant design, not stabilizable by error feedback (as a simple linear example would show). 
 
These conceptual problems, in principle not present in regulators of the postprocessing type, recently motivated the community to look for postprocessing alternatives to the existing regulators. In particular, the fundamental designs of Byrnes and Isidori~\cite{Byrnes2004} and Marconi et al.~\cite{Marconi2007} were ``shifted'' to an equivalent postprocessing design~\cite{Isidori2012b,Bin2017ifac}. Nevertheless, no conceptual progress has been made in terms of extensions to larger classes of systems compared to their preprocessing counterparts. A different approach to the design of postprocessing regulators was recently pursued by Astolfi  et al.~\cite{Astolfi2015b,Astolfi2017}, where the
linear regulator is attached to a class of nonlinear systems. In particular, the authors have shown that the output regulation problem can be solved \textit{robustly}~\cite{Bin2018CDC} by a postprocessing   integral action whenever the steady state is made of equilibria~\cite{Astolfi2017}, and then Astolfi et al.~\cite{Astolfi2015b} have extended the results to the case in which the steady-state signals are periodic, obtaining, however, only an approximate result stating that the Fourier coefficients
in the regulation errors corresponding to the frequencies embedded in the
internal model vanish at the steady state.

In this paper, for a class of nonlinear systems possessing a partial normal form, we investigate the existence of a postprocessing regulator  handling additional nonvanishing measurements and  input dimensions larger than the number of regulation errors. We give  conditions for asymptotic, practical, and approximate regulation results  of the same kind of those proved~\cite{Isidori2012} for the ``regression-like'' preprocessing regulator originally proposed by Byrnes and Isidori~\cite{Byrnes2004}. The proposed design shows a feature that is characteristic of postprocessing schemes (even if hidden by linearity in the linear case): the design of the stabilizer and that of the internal model unit are intertwined, and the two modules need to be co-designed (this property is known as the \textit{chicken-egg dilemma}  of output regulation~\cite{Bin2018b}). Compared to previous multivariable  approaches~\cite{Astolfi2013,Wang2016,Wang2017}, other than handling non-vanishing outputs and additional inputs, we develop the result without the quite restrictive assumption of the existence of a single-valued steady-state map defining the attractor of the zero dynamics, and we thus collocate in a broader ``nonequilibrium'' setting~\cite{Byrnes2003}.
The proposed design thus enlarges the class of multivariable systems for which the output regulation problem can be solved. Nevertheless, as  main drawbacks, the proposed approach  still limits to a design procedure strongly based on a high-gain perspective and on a minimum-phase assumption with respect to the full set of measured outputs, although the latter requirement is mitigated by the availability of additional measurements.
 Furthermore, due to the chicken-egg dilemma mentioned before,  the proposed conditions for asymptotic regulation become easily not constructive when the complexity of the problem increases. Nevertheless, the ``structure'' of the regulator remains fixed, and approximate, possibly practical, regulation is  guaranteed.\\

\textit{Notation.}  $\R$ denotes the set of real numbers, $\N$ the set of naturals and $\Rplus:=[0,\infty)$. 
With $r>0$, $\ball_r(\R^n)$ denotes the open ball of radius $r$ on $\R^n$. If $n$ is clear, we omit $\R^n$ and we write $\ball_r$. If $S\subset\R^n$ ($n\in\N$) is a set, we denote by $S^\circ$ its interior, by $\overline{S}$ is closure, and we let $|S|:=\sup_{s\in S}|s|$. With $x=(x_1,\dots,x_n)\in\R^n$ and $1\le i<j\le n$, we let $x_{[i,j]}:=(x_i,\dots,x_j)$. If $A_1,\dots,A_n$ are matrices, we let $\col(A_1,\dots,A_n)$ and $\diag(A_1,\dots,A_n)$ be their column and block-diagonal concatenations, whenever they make sense.
If $S$ is a closed set, $|x|_S:=\inf_{s\in S}|x-s|$ denotes the distance of $x\in\R^n$ to $S$. If $f$ is a function defined on $\R^n$, $f|_S$ denotes the restriction of $f$ on $S$. $\cC^1$ denotes the set of continuously differentiable functions. A continuous function $f:\Rplus\to\Rplus$ is of class-$K$   if it is   strictly increasing and $f(0)=0$. It is of class-$K_\infty$   if $f$ is of class-$K$ and $f(s)\to\infty$ as $s\to\infty$. A continuous function $\beta:\Rplus^2\to\Rplus$ is of class-$KL$  if  $\beta(\cdot,t)$ is of class-$K$ for each $t\in\Rplus$, and $\beta(s,\cdot)$ is strictly decreasing to zero for each $s\in\Rplus$. With $h:\R^n\to\R$  a $\cC^1$ function in the arguments $x_1,\dots,x_n$,  and $f:\R^n\to\R$, for each $i\in\{1,\dots,n\}$ we denote by $L_f^{(x_i)}h$ the map 
\[
x\mapsto L_{f(x)}^{(x_i)}h(x):=  \dfrac{\partial h(x)}{\partial x_i}   f(x).
\]
If $x:\R\to\R^n$ is a locally bounded function, we let $|x|_{[0,t)} := \sup_{s\in[0,t)}|x(s)|$. In this paper we consider differential equations of the form
\[
\Sigma:\; \dot x=f(x),\qquad x\in\R^n.
\]
With $X\subset\R^n$ and $\tau\in\Rplus$, we define the $\tau$-\textit{reachable set} from $X$ as
\begin{equation}\label{s:xx}
\cR_{\Sigma}^\tau(X) := \Big\{ \xi\in\R^n\suchthat\xi= x(t),\; x(0)\in X, t\ge\tau \Big\}.
\end{equation}
Clearly, $\tau_1\ge\tau_2$ implies $\cR_{\Sigma}^{\tau_2}(X)\subseteq\cR_{\Sigma}^{\tau_1}(X)$, so that the (possibly empty) $\Omega$-\textit{limit set}
\begin{equation*}
\Omega_\Sigma(X) := \lim_{\tau\to\infty}\cR_{\Sigma}^\tau(X) = \bigcap_{\tau\ge 0}\overline{\cR_{\Sigma}^\tau(X)}
\end{equation*}
is well-defined. The following theorem~\cite{Byrnes2003}, which follows directly by the definition of $\Omega_\Sigma$ and the group property of the solutions of \eqref{s:xx}, summarizes the main properties of $\Omega_\Sigma(X)$.
\begin{theorem}\label{thm:limit}
	$\Omega_\Sigma(X)$ exists and is closed. If there exists $\tau\ge 0$ such that $\cR_{\Sigma}^\tau(X)$ is bounded, then $\Omega_\Sigma^\tau(X)$ is compact, non empty, invariant and uniformly attractive from $X$. If in addition $\Omega_\Sigma(X)\subset X^\circ$, then $\Omega_\Sigma(X)$ is stable, and hence it is asymptotically stable.
\end{theorem}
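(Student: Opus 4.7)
The plan is to handle the four assertions in the order they are stated, exploiting only (i) elementary topology of nested closures, (ii) continuity of the flow (the ``group property'' of solutions mentioned in the statement), and (iii) basic compactness arguments.

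First, existence and closedness are essentially by definition: the family $\{\overline{\cR_{\Sigma}^\tau(X)}\}_{\tau\ge 0}$ is monotonically nonincreasing in $\tau$ (since $\cR_{\Sigma}^{\tau_1}(X)\subseteq\cR_{\Sigma}^{\tau_2}(X)$ whenever $\tau_1\ge\tau_2$), so the limit equals the intersection, which is well-defined and, being an intersection of closed sets, is closed.

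Next, assume $\cR_{\Sigma}^{\tau_0}(X)$ is bounded for some $\tau_0\ge 0$; then $\overline{\cR_{\Sigma}^{\tau_0}(X)}$ is compact, and the same holds for all larger $\tau$ by monotonicity. The family $\{\overline{\cR_{\Sigma}^\tau(X)}\}_{\tau\ge\tau_0}$ is a nested collection of nonempty compact sets, so by the finite intersection property its total intersection $\Omega_\Sigma(X)$ is nonempty and, being a closed subset of a compact set, compact. For invariance, take $\xi\in\Omega_\Sigma(X)$ and $t\ge 0$; choose a sequence $\tau_n\to\infty$ and $\xi_n\in\cR_{\Sigma}^{\tau_n}(X)$ with $\xi_n\to\xi$. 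Denoting by $\phi_t$ the flow of $\Sigma$, the group property gives $\phi_t(\xi_n)\in\cR_{\Sigma}^{\tau_n+t}(X)$, and continuity of $\phi_t$ (local boundedness plus standard ODE theory on the compact set $\overline{\cR_{\Sigma}^{\tau_0}(X)}$) yields $\phi_t(\xi_n)\to\phi_t(\xi)$, so $\phi_t(\xi)\in\bigcap_{\tau\ge 0}\overline{\cR_{\Sigma}^\tau(X)}=\Omega_\Sigma(X)$. Uniform attractivity from $X$ is proved by contradiction: if it fails, there exist $\varepsilon>0$, initial data $x_n(0)\in X$ and times $t_n\to\infty$ with $|x_n(t_n)|_{\Omega_\Sigma(X)}\ge\varepsilon$; eventually $t_n\ge\tau_0$, so $x_n(t_n)$ lies in the compact set $\overline{\cR_{\Sigma}^{\tau_0}(X)}$; a convergent subsequence $x_{n_k}(t_{n_k})\to\xi^\star$ then satisfies $\xi^\star\in\overline{\cR_{\Sigma}^\tau(X)}$ for every $\tau\ge 0$ (because $t_{n_k}\ge\tau$ for $k$ large), hence $\xi^\star\in\Omega_\Sigma(X)$, contradicting $|\xi^\star|_{\Omega_\Sigma(X)}\ge\varepsilon$.

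Finally, the stability assertion under $\Omega_\Sigma(X)\subset X^\circ$ is the delicate part. The idea is to combine attractivity with the hypothesis that nearby points all lie in the basin of attraction $X$. Fix $\varepsilon>0$ small enough that the closed $\varepsilon$-neighborhood $N_\varepsilon$ of $\Omega_\Sigma(X)$ is contained in $X$; this is possible because $\Omega_\Sigma(X)$ is compact and lies in the open set $X^\circ$. By the uniform attractivity from $X$ just proved, there exists $T>0$ such that every trajectory starting in $N_\varepsilon\subset X$ enters $N_{\varepsilon/2}$ and stays inside $N_\varepsilon$ for all $t\ge T$. On the compact time interval $[0,T]$, continuous dependence on initial conditions provides $\delta\in(0,\varepsilon/2)$ such that trajectories starting in the $\delta$-neighborhood of $\Omega_\Sigma(X)$ (which by invariance are close to trajectories originating in $\Omega_\Sigma(X)$, all of which remain in $\Omega_\Sigma(X)\subset N_{\varepsilon/2}$) stay inside $N_\varepsilon$ for all $t\in[0,T]$; splicing the two estimates gives stability, and together with attractivity one concludes asymptotic stability. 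The main obstacle I anticipate is precisely this last splicing: one must verify that continuous dependence on initial data is uniform on the compact invariant set $\Omega_\Sigma(X)$ and that the $\delta$-neighborhood can be chosen inside $X$, which is why the hypothesis $\Omega_\Sigma(X)\subset X^\circ$ is indispensable.
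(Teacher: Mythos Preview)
Your proof is correct and follows the standard route (nested compacta for nonemptiness/compactness, continuity of the flow for invariance, a compactness/contradiction argument for uniform attractivity, and invariance plus continuous dependence on compact time intervals for stability). One minor point: you argue only forward invariance explicitly, but the identical construction with $\phi_{-t}$ in place of $\phi_t$ (using $\phi_{-t}(\xi_n)\in\cR_\Sigma^{\tau_n-t}(X)$ for $\tau_n>t$) yields backward invariance as well, which is what the paper later uses when it picks backward-in-time solutions on $\cA$.

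As for comparison: the paper does \emph{not} supply its own proof of this theorem. It merely states the result, attributes it to~\cite{Byrnes2003}, and remarks that it ``follows directly by the definition of $\Omega_\Sigma$ and the group property of the solutions.'' Your write-up is therefore strictly more detailed than anything in the paper; it is precisely the elementary argument the authors allude to but do not spell out.
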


\section{The Framework}
 We consider   nonlinear systems of the form
\begin{equation}\label{s:wx}
\begin{array}{lcl}
\dot w&=&s(w)\\
\dot x &=& f(w,x) + b(w,x)u \\
y&=& \begin{pmatrix}
e  \\ 
y_a\end{pmatrix} =\begin{pmatrix}
h_e(w,x)\\h_a(w,x) 
\end{pmatrix}   
\end{array}
\end{equation}
where  $x\in\R^\dx$ is the state of the plant, $y\in\R^\dy$ is the measured output, $u\in\R^\du$ is the control input, with $\du\ge\dy$,  and $w\in\R^\dw$ is an \textit{exogenous} signal modeling disturbances acting on the plant and references to be tracked (as customary in the literature of output regulation we refer to the subsystem $w$ as the \textit{exosystem}). The output $y$ is subdivided into two components, $e\in\R^\de$ and  $y_a\in\R^{\dya}$  ($\dy=\de+\dya$). The outputs $e$ are called the \textit{regulation errors}, and they represent the quantities that we aim to drive to zero asymptotically. The outputs $y_a$ represent instead additional measurable outputs that need not to vanish at the steady state, but that are available for stabilization purposes. More precisely,  we consider the problem of \textit{approximate output regulation} for system \eqref{s:wx}, which consists in finding an output-feedback regulator  ensuring boundedness of the closed-loop trajectories and that 
\[
\limsup_{t\to\infty}|e(t)|\le \varepsilon,
\]
with $\varepsilon\ge 0$ possibly a small number. In particular, we say that the problem is solved \textit{asymptotically} if $\varepsilon=0$, or   \textit{practically} if  $\varepsilon>0$ can be made arbitrarily small by opportunely tuning the regulator.

In the following we construct a regulator dealing with the problem of output regulation for \eqref{s:wx} under a number of assumption detailed hereafter. We first  assume that there exists a compact set $W\subset\R^\dw$ which is forward invariant for the exosystem, and we limit our analysis to the set of $w$ originating (and thus staying) in $W$. We also assume that the functions $s$, $f$ and $b$ are locally Lipschitz, and sufficiently smooth so that  there exist $r>0$, a set of integers $p_1,\dots,p_{r}>0$  satisfying
 $p_1+\dots+p_{r}=\dy$, a set of integers $N_1,\dots, N_{r}>0$ satisfying $p_1 N_1+\dots+p_{r}N_{r}=:N\le \dx$, and, for $i=1,\dots,r$, a set of $\R^{p_i}$-valued smooth functions\footnote{With slight abuse of notation, in the following we will call with the same symbols $\xi$ and $\zeta$ both the functions $\xi(w,x)$ and $\zeta(w,x)$ and the functions $t\mapsto \xi(w(t),x(t))$ and $t\mapsto\zeta(w(t),x(t))$.} $\{\xi^i_1(w,x),\dots,\xi^i_{N_i-1}(w,x),\zeta^i(w,x)\}$ with linearly independent differentials, such that, by letting $\xi:=\col(\xi^1,\dots,\xi^{r})\in\R^{N-\dy}$, $\xi^i:=\col(\xi^i_1,\dots,\xi^i_{N_i-1})\in\R^{p_i(N_i-1)}$, and  $\zeta:=\col(\zeta^1,\dots,\zeta^{r})\in\R^{\dy}$, we have that
\begin{equation*}%\label{e:pd_xi_g}
L_{b(w,x)}^{(x)}\xi(w,x) =0
\end{equation*}
for all $(w,x)\in\R^\dw\x\R^\dx$ and that, along the solutions to \eqref{s:wx}, $\xi$ and $\zeta$ satisfy
\begin{subequations}
	\begin{align}
	\dot{\xi} &= F \xi + H\zeta \label{s:xi}\\
	\dot{ \zeta} &= q(w,x) + B(w,x)u\label{s:zeta}\\
	y &=  T C\xi , \label{s:y_xizeta}
	\end{align}
\end{subequations}
for some locally Lipschitz functions $q:\R^\dw\x\R^\dx\to\R^\dy$ and $B:\R^\dw\x\R^\dx\to\R^{\dy\x\du}$ with $q(0,0)=0$, and in which $C:=\diag(C_1,\dots,C_{r})$ with $C_i :=\begin{pmatrix}
I_{p_i} & 0_{p_i\x p_i(N_i-2)} 
\end{pmatrix}$, 
$T\in\R^{\dy\x \dy}$ is a known permutation matrix,  and $F\in\R^{(N-\dy)\x(N-\dy)}$ and $H\in\R^{(N-\dy)\x \dy}$ are  block lower-triangular matrices whose diagonal blocks are given, respectively, by
\begin{align*}
F_{ii} &:= \begin{pmatrix}
0_{p_i(N_i-2)\x p_i} & I_{p_i(N_i-2)}\\
0_{p_i} & 0_{p_i\x p_i(N_i-2)}
\end{pmatrix}, &  H_{ii}  &:= \begin{pmatrix}
0_{ p_i (N_i-2)\x p_i}\\I_{p_i}
\end{pmatrix}.
\end{align*}
According to the partition $y=\col(e,y_a)$, we let $C_e\in\R^{\de\x (N-\dy)}$ and $C_a\in\R^{\dya\x (N-\dy)}$  be such that
\begin{equation*}
TC =\begin{pmatrix}
C_e\\C_a
\end{pmatrix}.
\end{equation*}
For simplicity, we develop here the case in which $T=I_{\dy}$, i.e., we assume that
\begin{align*}
e &= C_e\xi = \col(\xi_1^i\suchthat i=1,\dots, r_e)\\
y_a &= C_a\xi = \col(\xi_1^i\suchthat i=r_e+1,\dots, r )
\end{align*}
where $r_e$ is such that $p_1+\dots+p_{r_e}=\de$ and $r_a:=r-r_e$ (we also let $N_e=p_1N_1+\dots+p_{r_e}N_{r_e}$ and $N_a=N-N_e$).  We note though, that the result can be easily extended to arbitrary transformations $T$ by means of  more involved technical treatise. In the following we also let
\begin{equation}\label{d:xie}
\begin{aligned}
 \xi^e&:=\col(\xi^i\suchthat i=1,\dots,r_e),& \xi^a&:=\col(\xi^i\suchthat i=r_e+1,\dots,r),\\  \zeta^e&:=\col(\zeta^i\suchthat i=1,\dots,r_e), &  \zeta^a&:=\col(\zeta^i\suchthat i=r_e+1,\dots,r).
\end{aligned}
\end{equation}

\begin{remark}
	The class of systems considered includes multivariable normal forms and partial normal forms~\cite{Isidori1995book2}, with the latter that always exist locally whenever (possibly after a preliminary feedback) the system \eqref{s:wx} is \textit{a)} strongly invertible in the sense of Hirschorn and Singh~\cite{Hirschorn1979,Singh1981}, and \textit{b)} input-output linearizable\footnote{That is~\cite{Isidori1995book}, there exists a state-feedback control of the form $u=\alpha(w,x)+G(w,x)v$, with $v\in\R^{\du}$ an auxiliary input and $G$ full rank, such that the resulting system has linear input-output behavior from $v$ to $y$.}. The dimension $\du$ of the input is not constrained to be equal to the dimension $\de$ of the regulation errors nor the dimension $\dy$ of the overall measured outputs, i.e., we consider possibly \textit{non-square} systems in which $\du\ge\dy$. 
\end{remark} 
On system \eqref{s:exy:plant} we then make the following strong minimum-phase and controllability assumptions.
\begin{assumption}\label{ass:ios_mf}
	There exist a class-$KL$ function $\beta$, and class-$K$ functions $\rho_0$ and $\rho_1$, with $\rho_1$ locally Lipschitz, such that  all the solution pairs $(w,x,u)$ to \eqref{s:wx}, with $u$ locally bounded,  satisfy
	\begin{equation*}
	|x(t)| \le \beta(|x(0)| ,t) + \rho_0\big(|w|_{[0,t)}\big) + \rho_1\big(|(\xi,\zeta)|_{[0,t)} \big),
	\end{equation*}
	for all $t\in\Rplus$ for which they are defined.
\end{assumption}

\begin{assumption}\label{ass:P_L}
	There exists a $\cC^1$ map $\cP:\R^\dw\x\R^\dx \to \R^{\dy\x \dy}$, constants $\und\lambda,\,\bar\lambda>0$, and a full-rank matrix $\cL\in\R^{\du\x \dy}$ such that:
	\begin{enumerate}
		\item [a.] for all $(w,x)\in W\x\R^{\dx}$ \[\und\lambda I\le \cP(w,x)\le \bar\lambda I.\]
		\item [b.] For all $(w,x)\in W\x\R^\dx$ and all $u\in\R^\du$,\[  L_{b(w,x)u}^{(x)}\cP(w,x)   = 0.\]
		\item[c.] for all $(w,z)\in W\x\R^\dx$
		\begin{equation*}
		\cL^\top  B(w,x)^\top \cP(w,x) + \cP(w,x)B(w,x)\cL \ge I.
		\end{equation*}
	\end{enumerate}
\end{assumption}
\begin{remark}
	As in the context of normal forms and partial normal forms, $\xi$ and $\zeta$ are combinations of derivatives of the output $y$. Hence, Assumption \ref{ass:ios_mf} can be seen as an uniform (in $u$) \textit{``output-input stability''} (OIS) property \cite{Liberzon2002,Liberzon2004}  of $x$, that here plays the role of a \textit{minimum-phase} assumption. Similar minimum-phase assumptions appeared in many state-of-art frameworks~\cite{Wang2015,Wang2016,Wang2017}, in which, however, the OIS property is asked with respect to a compact attractor for $(w,x)$, typically coincident with the graph of a single-valued steady-state map $\pi:\R^{\dw}\to\R^\dx$ whose existence must be assumed. We stress, moreover, that here the minimum phase is asked with respect to the \textit{whole set of outputs} (included those that do not need to vanish at the steady state) and, thus, Assumption \ref{ass:ios_mf} is milder than usual minimum-phase assumptions, and it can be possibly obtained by adding further measurements. 
%	We also note that the result presented here can be extended to the case in which Assumption \ref{ass:ios_mf} holds only locally, provided, however, that the IOS property holds only with respect to $y=C\xi$.   
\end{remark}
\begin{remark}
	Assumption \ref{ass:P_L} is a controllability condition needed here to employ the proposed high-gain  stabilization technique. We underline that, while $\cL$ will be used in the definition of the regulator, $\cP$ needs only to exist, and it needs not to be known by the designer. As shown in Section  \ref{sec:P} below, this assumption is implicated by many customary assumptions made in the context of regulation and stabilization of partial normal forms~\cite{McGregor2006,Astolfi2013,Wang2016,Wang2017,Wang2015,Wang2015uncB,Back2009}.
\end{remark}

\section{A Postprocessing Regulator}\label{sec:regulator}
\subsection{The Regulator Structure}
With $d\in\N$ an arbitrary index, the proposed regulator is a system with state $\eta\in\R^{d\de}$ whose dynamics is described by the following equations
\begin{equation}\label{s:eta}
\begin{array}{lcl}
\dot \eta &=& \Phi(\eta) + Ge\\
u &=& \cL\big( \cK_\xi \xi + \cK_\zeta \zeta+ \cK_\eta\eta_1\big) ,
\end{array}
\end{equation}
with $\Phi$ and $G$ having  the form
\begin{align*}
\Phi( \eta )&: = \begin{pmatrix} 0 & I_{\de} & 0 & \cdots  & 0\\ 0 & 0 & I_{\de} & \cdots & 0\\ 
\cdot & \cdot & \cdot & \cdots & \cdot\\
0 & 0 & 0  &\cdots & I_{\de}\\
& & \phi(\eta)
\end{pmatrix},& G &:= \begin{pmatrix} G_1 \\G_2 \\ \vdots \\G_d \end{pmatrix}
\end{align*}
in which $\phi:\R^{d \de }\to \R^{\de}$ is a  Lipschitz function  satisfying 
\begin{equation}\label{e:bound_phi}
\begin{aligned}
% |\phi(\eta)|&\le C_\phi, & \forall &\eta\in\R^\deta\\
 |\phi(\eta_1)-\phi(\eta_2)|&\le L_\phi|\eta_1-\eta_2|, & \forall&\eta_1,\,\eta_2\in\R^\deta \\
 \phi(0) &= 0
\end{aligned}
\end{equation} 
for some $L_\phi>0$, and in which $G_i\in\R^{\de\x \de}$, $\cK_\xi\in\R^{\dy\x (N-\dy)}$, $\cK_\zeta\in\R^{\dy\x \dy}$ and $\cK_\eta\in\R^{\dy\x \de}$ are control gains to be designed, with $\cK_\eta$  that has the form
\begin{equation}\label{d:cK_eta}
\cK_\eta = \begin{pmatrix}
\cK_\eta'\\
0_{\dya\x \de}
\end{pmatrix},
\end{equation}
for some $\cK'_\eta\in\R^{\de\x \de}$. All these degrees of freedom have to be fixed according to the forthcoming Proposition \ref{prop:boundedness} and Theorem \ref{thm:main}.

\begin{remark}
	We give here a {\it partial state feedback} result employing the auxiliary variables $\xi$ and $\zeta$ (i.e. combinations of derivatives of the measured output $y$). We note, however, that a purely output-feedback regulator can be easily obtained by augmenting \eqref{s:eta} with a partial-state observer~\cite{Wang2015,Atassi1999,Teel1995}.
\end{remark}

As mentioned in the introduction, the chicken-egg dilemma~\cite{Bin2018b} prevents a sequential design of the different parts of regulator, leading us to a more complicated construction. Since the ideal choice of $\phi$ in \eqref{s:eta} depends on the actual choice of the matrices $\cL$, $\cK_\xi$, $\cK_\zeta$ and $\cK_\eta$, before discussing the choice of $\phi$, we state a preliminary uniform boundedness result, which allows us to individuate a class of possible steady-state trajectories on which $\phi$ can be tuned.  

\subsection{Existence of a Steady State}
The closed-loop system reads as follows
\begin{equation}\label{s:cl}
\Sigma_{cl}:\;\left\{ \begin{array}{lcl}
\dot w&=&s(w)\\
\dot x &=& f(w,x) + b(w,x) \cL\big( \cK_\xi \xi(w,x) + \cK_\zeta \zeta(w,x)+ \cK_\eta\eta_1\big)  \\ 
\dot \eta &=& \Phi(\eta) + Gh_e(w,x) .
\end{array}\right. 
\end{equation}
The following proposition, proved in Section \ref{sec:proof_prop_boundedness}, shows that the parameters of the regulator \eqref{sec:regulator} can be chosen to ensure (semi-global) uniform boundedness of the closed-loop trajectories, and hence the existence of a steady state.  
\begin{proposition}\label{prop:boundedness}
Suppose that Assumptions \ref{ass:ios_mf} and \ref{ass:P_L} hold and consider the regulator \eqref{s:eta}, with  $d>1$ and $\phi$ satisfying \eqref{e:bound_phi} for some $L_\phi>0$ and $\cL$ given by Assumption \ref{ass:P_L}. Then there exists a class-$K$ functions $\rho_{cl}$ and, for each pair of compact sets $X\subset\R^\dx$ and $H\subset\R^\deta$, a class-$KL$ function   $\beta_{cl}$  and  matrices  $\cK_\xi$, $\cK_\zeta$, $\cK_\eta$ and $G$, with $\cK_\eta$ of the form \eqref{d:cK_eta} with $\cK_\eta'$ invertible, such that all the solutions to $\Sigma_{cl}$ originating in $W\x X\x H$ are complete and satisfy
\begin{equation}\label{claim:prop1}
|(x(t),\eta(t))|  \le \beta_{cl}(|(x(0),\eta(0))|,t)  +  \rho_{cl}(|w|_{[0,t)})  ,
\end{equation}  
for all $t\ge 0$.
\end{proposition}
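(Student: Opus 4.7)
The approach is a high-gain cascade stabilisation. The block-triangular structure of $F$ and $H$ in \eqref{s:xi} reveals that, for each $i$, $(\xi^i,\zeta^i)$ is a chain of $N_i$ integrators fed at the bottom by the $i$-th row of $B(w,x)\cL$; Assumption~\ref{ass:P_L} plays the role of a generalised matching condition, making this chain stabilisable by a single high-gain feedback uniformly in $(w,x)$ on compact sets. Assumption~\ref{ass:ios_mf} is used at the end to close the loop on $x$. The internal-model state $\eta$ is itself a chain of $d$ integrator blocks driven by $e$ with a globally Lipschitz correction $\phi$ at the bottom, and it will be handled by a high-gain observer-like design on $G$.

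First I would pick $\cK_\xi$ and $\cK_\zeta$ in the standard high-gain integrator-chain form, e.g.\ $\cK_\zeta=-\kappa\cL^\top$ together with a $\cK_\xi$ whose blocks scale as powers of a parameter $\kappa>0$ chosen so that the target closed-loop matrix is Hurwitz independently of $\kappa$. Using the candidate Lyapunov function
\[
V_{\xi\zeta}(w,x,\xi,\zeta):=\xi^\top P_\xi\xi+\mu\,\zeta^\top\cP(w,x)\zeta,
\]
with $P_\xi>0$ solving the Lyapunov equation for that target matrix and $\mu>0$ small, I would compute $\dot V_{\xi\zeta}$ along \eqref{s:cl}. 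Assumption~\ref{ass:P_L}(a) makes $V_{\xi\zeta}$ positive definite and proper; (b) kills the $L^{(x)}_{bu}\cP$ term, which would otherwise scale badly in $\kappa$; (c) yields the dominant negative contribution $-2\kappa\mu\,\zeta^\top\cP B\cL\zeta\le-\mu\kappa|\zeta|^2$ that, via Young's inequality, absorbs $q(w,x)$ together with the cross-couplings with $\xi$ and with $\cL\cK_\eta\eta_1$ on any compact set of $(w,x,\eta_1)$.

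Next I would take $G_i=a_i\ell^i I_{\de}$, with coefficients $a_i$ placing $s^d+a_1 s^{d-1}+\cdots+a_d$ in Hurwitz position and $\ell>0$ large. After an $\ell$-scaling of $\eta$, an associated quadratic Lyapunov function $V_\eta$ satisfies $\dot V_\eta\le -c\ell|\eta|^2+c'|e|^2/\ell$ as soon as $\ell$ dominates $L_\phi$; here both \eqref{e:bound_phi} and $\phi(0)=0$ are used. Forming $V=V_{\xi\zeta}+\lambda V_\eta$ with small $\lambda>0$ and $\kappa,\ell$ large in terms of $X$ and $H$, the couplings between $\eta$ and $(\xi,\zeta)$ are dominated by the two negative terms and one obtains an estimate of the form $\dot V\le -c_1 V+\gamma(|w|)$ uniformly on $W\x X'$, where $X'$ is a compact overset of $X$ yet to be determined.

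Finally I would close the loop on $x$ using Assumption~\ref{ass:ios_mf}. Integrating the previous inequality gives a uniform bound on $(\xi,\zeta,\eta)$ in terms of $w$ and of the initial data, as long as $x(s)\in X'$ for $s\in[0,t)$; Assumption~\ref{ass:ios_mf} then bounds $|x(t)|$ in terms of $|x(0)|$, $|w|_{[0,t)}$ and $|(\xi,\zeta)|_{[0,t)}$, and a standard causal bootstrap argument shows that, for $\kappa$ and $\ell$ chosen sufficiently large depending on $X$, $H$ and the class-$K$ functions of Assumption~\ref{ass:ios_mf}, $x$ never exits $X'$. Completeness of solutions and \eqref{claim:prop1} then follow. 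The main difficulty I anticipate is precisely this chicken-egg coupling: the gains must be sized on a compact set whose size is not known a priori; it is resolved by the semi-global nature of the statement, fixing $X,H$ first, introducing $X'$, and sizing $\kappa,\ell$ so as to render $W\x X'$ forward invariant.
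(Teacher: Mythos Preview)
Your overall architecture matches the paper's: high-gain stabilisation of the $(\xi,\zeta)$ cascade using Assumption~\ref{ass:P_L}, a high-gain observer-type choice $G_i=a_i\ell^iI_{\de}$ for the internal model, and a semi-global bootstrap through Assumption~\ref{ass:ios_mf}. The forward-invariance/contradiction argument you sketch at the end is exactly how the paper closes the loop on $x$.

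There is, however, a genuine gap in the middle step. The bound $\dot V_\eta\le -c\ell|\eta|^2+c'|e|^2/\ell$ is not what a high-gain observer delivers: after the rescaling $\mu:=\Delta(\ell)^{-1}\eta$ one obtains $\dot\mu=\ell M\mu+\ell^{1-d}E\phi(\Delta(\ell)\mu)+\ell R\,e$, so the driving term $e$ enters with gain $\ell$, and the resulting ISS gain from $e$ to $\mu$ (hence to $\eta_1=\mu_1$) is $O(1)$, not $O(1/\ell)$. Consequently the $\eta\leftrightarrow(\xi,\zeta)$ loop cannot be closed merely by taking $\ell$ large in a weighted Lyapunov sum. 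In addition you never specify $\cK_\eta$; with $\cK_\eta'$ required invertible, the cross term $B\cL\cK_\eta\eta_1$ in $\dot V_{\xi\zeta}$ does not disappear, and its size is not controlled by your parameters. (Minor: $\cK_\zeta=-\kappa\cL^\top$ does not produce the term $-2\kappa\mu\,\zeta^\top\cP B\cL\zeta$ you write; $\cK_\zeta=-\kappa I$ does, and that is what matches Assumption~\ref{ass:P_L}(c).)

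The paper fixes both issues by a coordinate shift you are missing: set $\chi:=\xi+C_e^\top\eta_1$ and $\bar e:=e+\eta_1$. Then $e=\bar e-\eta_1$, so the internal model reads $\dot\eta=\Phi(\eta)+G(\bar e-\Gamma\eta)$, i.e.\ a high-gain observer driven by $\bar e$ with a \emph{stabilising} self-feedback, and one gets $|\mu(t)|\le\beta_\mu(|\mu(0)|,\ell t)+a_2|\bar e|_{[0,t)}$ with $a_2$ independent of $\ell$. Choosing $\cK_\xi=\ell K$, $\cK_\zeta=-\ell I$, $\cK_\eta=\ell K C_e^\top$ (which makes $\cK_\eta'$ invertible by construction of $K$) collapses the control to $u=-\ell\cL\bar\zeta$ with $\bar\zeta:=\zeta-K\chi$. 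A separate backstepping lemma on the $\chi$ chain (the paper's Lemma~\ref{lem:kappa}) then supplies an \emph{arbitrarily small} gain $\epsilon$ from $(\bar\zeta^e,\eta_1,\eta_2)$ to $\bar e$, and it is this designed smallness of $\epsilon$---not a large $\ell$---that closes the small-gain loop with the $O(1)$ observer gain $a_2$. After that, your bootstrap via Assumption~\ref{ass:ios_mf} and Lemma~\ref{lem:zeta} on $\bar\zeta$ goes through essentially as you describe.
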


The claim of Proposition \ref{prop:boundedness} shows that, despite the particular $\phi$ implemented in \eqref{s:eta}, if it satisfies \eqref{e:bound_phi}, then for each compact set of initial conditions the other degrees of freedom of the regulator can be chosen to ensure that   the closed-loop system $\Sigma_{cl}$ is uniformly bounded\footnote{That is, there exists $M>0$ such that every trajectory $(w,x,\eta)$ of the closed-loop system \eqref{s:wx}, \eqref{s:eta} originating in $W\x X\x H$ satisfies $|(w(t),x(t),\eta(t))|\le M$ for each $t\in\Rplus$.} from $W\x X\x H$ and hence, by Theorem \ref{thm:limit}, there exists a compact  attractor
\begin{equation}\label{d:cA}
\cA:= \Omega_{\Sigma_{cl}}(W\x X\x H)
\end{equation}
which is invariant and uniformly attractive from $W\x X\x H$. 
Moreover, since $\rho_{cl}$ does not depend on $X$ and $H$, the latter sets can be thought, without loss of generality, to be large enough so that $\cA\subset (W\x X\x H)^\circ$, which implies that $\cA$ is also asymptotically stable.

\subsection{Tuning the Internal Model}
If \eqref{e:bound_phi} holds,   Proposition \ref{prop:boundedness} ensures   the existence of $G$,  $\cK_\xi$, $\cK_\zeta$, $\cK_\eta$, such that there exists a compact attractor $\cA$ for the trajectories closed-loop system $\Sigma_{cl}$ originating in $W\x X\x H$. In this section, we consider  the restriction of $\Sigma_{cl}$ on $\cA$, i.e. the system\footnote{We remark that, being $\cA$ invariant, all the solutions to $\Sigma_{cl}^\cA$ are complete.}
\begin{equation}\label{s:cl_cA}
\Sigma_{cl}^\cA:\;\left\{ \begin{array}{lcl}
\dot w&=&s(w)\\
\dot x &=& f(w,x) + b(w,x) \cL\big( \cK_\xi \xi(w,x) + \cK_\zeta \zeta(w,x)+ \cK_\eta\eta_1\big)  \\ 
\dot \eta &=& \Phi(\eta) + Gh_e(w,x)  
\end{array}\right. \qquad (w,x,\eta)\in\cA,
\end{equation}
and we investigate the conditions that $\phi$ must satisfy on $\cA$ to guarantee a given asymptotic performance, expressed in terms of the bound (ideally zero) on the regulation error $e=h_e(w,x)$ along the solutions to \eqref{s:cl_cA}. This conditions are then used in the forthcoming Theorem \ref{thm:main} to complement the claim of Proposition \ref{prop:boundedness} with an asymptotic bound on the regulation error depending on ``how well'' the function $\phi$ is chosen.

Substituting the expression of $u$ into \eqref{s:zeta} yields
\begin{equation}\label{e:dzeta_u}
\dot\zeta = q(w,x) +   B(w,x)\cL ( \cK_\xi \xi + \cK_\zeta\zeta + \cK_\eta\eta_1).
\end{equation}
Let us define the projected set $\cO$ as
\begin{equation*}
 \cO:= \Big\{ (w,x)\in W\x\R^\dx \suchthat  (w,x,\eta)\in\cA \Big\} ,
\end{equation*}
let $q^e$ and $q^a$ be functions with values in $\R^{\de}$ and $\R^{\dya}$ respectively such that $q(w,x)=\col(q^e(w,x),q^a(w,x))$, and    denote for brevity
\[
D(w,x) := B(w,x)\cL \in\R^{\dy\x \dy}.
\] 
Partition $D(w,x)$, $\cK_\xi$ and $\cK_\zeta$ as:
\begin{equation}\label{d:D_K}
\begin{aligned}
D&:= \begin{pmatrix}
D^{e,e} & D^{e,a}\\D^{a,e} & D^{a,a}
\end{pmatrix}, & \cK_\xi&:= \begin{pmatrix}
\cK_\xi^{e,e} & \cK_\xi^{e,a}\\\cK_\xi^{a,e} & \cK_\xi^{a,a}
\end{pmatrix}, & \cK_\zeta&:= \begin{pmatrix}
\cK_\zeta^{e,e} & \cK_\zeta^{e,a}\\\cK_\zeta^{a,e} & \cK_\zeta^{a,a}
\end{pmatrix}
\end{aligned}
\end{equation}
for some $D^{i,j}(w,x),\cK_\xi^{i,j},\cK_\zeta^{i,j}\in\R^{n_i\x n_j}$, $i,j\in\{e,a\}$. Then, in view of \eqref{d:cK_eta}, Equation \eqref{e:dzeta_u} gives
\begin{equation}\label{e:zeta_e_dot}
\dot{\zeta}^e  = q^e(w,x) + D^{e,e}(w,x)\Big( \cK_\xi^{e,e}\xi^e + \cK_\xi^{e,a}\xi^a+\cK_\zeta^{e,e}\zeta^e+\cK_\zeta^{e,a}\zeta^a + \cK_\eta'\eta_1 \Big) +  D^{e,a}(w,x)\Big( \cK_\xi^{a,e}\xi^e + \cK_\xi^{a,a}\xi^a+\cK_\zeta^{a,e}\zeta^e+\cK_\zeta^{a,a}\zeta^a \Big).
\end{equation}

In view of \eqref{e:zeta_e_dot}, the regulation error $e$, along with its derivatives, vanish on the attractor $\cA$ if and only if $\xi^e=0$, $\zeta^e=0$ (both implying $\dot\xi^e=0$),  and
\begin{equation}\label{e:zeta_e_dot2}
0  = q^e(w,x) + D^{e,e}(w,x)\Big(  \cK_\xi^{e,a}\xi^a +\cK_\zeta^{e,a}\zeta^a + \cK_\eta'\eta_1 \Big) +  D^{e,a}(w,x)\Big(  \cK_\xi^{a,a}\xi^a +\cK_\zeta^{a,a}\zeta^a \Big) 
\end{equation}
for all $(w,x)\in\cO$. By solving \eqref{e:zeta_e_dot2} for $\eta_1$, we thus obtain an ideal steady-state value for $\eta_1$, given by the  function $\eta^\star_1:\cO\to\R^{\de}$ satisfying
\begin{equation}\label{e:eq_eta1_star} 
D^{e,e}(w,x)\cK_\eta'\eta^\star_1(w,x)  = - q^e(w,x) - D^{e,e}(w,x)\Big( \cK^{e,a}_\xi \xi^a(w,x) + \cK^{e,a}_\zeta\zeta^a (w,x)\Big)  - D^{e,a}(w,x)\Big( \cK^{a,a}_\xi \xi^a(w,x) + \cK^{a,a}_\zeta\zeta^a  (w,x)\Big) 
\end{equation}
for all $(w,x)\in\cO$. Since $\eta_1$ is the first component of the internal model unit $\eta$ of \eqref{s:eta}, and since in the ideal condition in which $e=0$ the system $\eta$ coincides with the autonomous equation
\begin{equation}\label{s:eta_autonomous}
\dot\eta = \Phi(\eta),
\end{equation}
then for each solution to $\Sigma_{cl}^\cA$, \eqref{e:eq_eta1_star} individuates an ideal steady-state value for $\eta$ given by
\begin{equation}\label{d:etasr}
\eta\sr(t) := \col\begin{pmatrix}
\eta_1\sr(t), &
\dot\eta_1\sr(t),& 
\cdots,&
\eta_1\sr{}^{(d-1)}(t)
\end{pmatrix}
\end{equation}
in which we denoted for brevity $\eta\sr_1(t)=\eta\sr_1(w(t),x(t))$. 
In view of the structure of the map $\Phi$ in \eqref{s:eta}, such an ideal steady state trajectory $\eta\sr$ may be a solution of \eqref{s:eta_autonomous} only if $\eta_1\sr(t)$ satisfies
\begin{equation}\label{IMp_t}
\eta_1\sr{}^{(d)}(t) = \phi\Big( \eta_1\sr(t),\ \dot\eta_1\sr(t),\ \dots, \eta_1\sr{}^{(d-1)}(t)\Big)
\end{equation}
almost everywhere.
%in which $\eta\sr_{d+1}$ is defined as in \eqref{e:eta_star} and is well-defined under Assumption \ref{ass:etastar}. The fact that \eqref{IMp_t} must hold for each trajectory originating in $\cA$, implies that the function $\phi$ of \eqref{s:eta} must be chosen so that
%\begin{equation}\label{e:IMp}
%\phi(\eta\sr(w,x)) = \eta\sr_{d+1}(w,x), \qquad \forall (w,x)\in\cO.	
%\end{equation}
Condition \eqref{IMp_t} expresses indeed  the \textit{internal model property}, that is, the property of the regulator to generate all the ideal  steady-state control actions $\eta_1\sr(t)$ needed to keep the regulation error $e$ to zero. We also observe that the chicken-egg dilemma is strongly present in \eqref{IMp_t}, since $\eta_1\sr$ and its derivatives, and thus the correct value of $\phi$ to be implemented, depend on the closed-loop trajectories and thus, in particular, on stabilization gains  $\cK_\xi$, $\cK_\zeta,$ and $\cK_\eta$, the latter   dependent from the Lipschitz constant $L_\phi$ of $\phi$. 
%Nevertheless, it is the a priori knowledge that the designer has on the asymptotic trajectories of the closed-loop systems that guides the choice of $\phi$, ideally satisfying \eqref{IMp_t}. 
\begin{remark}
	We   observe that in the linear case the chicken-egg dilemma is broken by the fact that, no matter how the stabilization gains are chosen, if a compact attractor $\cA$ exists then all the closed-loop trajectories  have the same modes of the exosystem (the closed-loop system \eqref{s:cl}, indeed, is a linear stable system driven by the exosystem). Therefore, a function $\phi$ satisfying \eqref{e:bound_phi} and \eqref{IMp_t} can be fixed a priori according to the knowledge of the exosystem dynamics. A similar situation also takes place in a nonlinear setting if $\dot w=0$. If $d$ and $\phi$ are   taken as $d=1$ and $\phi(\eta)=0$, and if the closed-loop system $\Sigma_{cl}$ can be made ``contractive'' via stabilization, then the steady-state trajectories (i.e., the solutions to $\Sigma_{cl}^\cA$) are constant, and thus \eqref{IMp_t} holds. This is in fact the well-known integral action~\cite{Astolfi2017}.
\end{remark}

% the restriction on $\cO$ of the map $\eta\sr$ defined as
%\begin{equation}\label{e:eta_star}
%\begin{aligned}
%\eta\sr(w,x) &= \col\Big( 
%\eta\sr_1(w,x),\ 
%\eta\sr_2(w,x),\ 
%\dots,\ 
%\eta\sr_d(w,x)\Big)\\
%\eta\sr_i(w,x) &= \pd{\eta\sr_{i-1}(w,x)}{w}s(w) + \pd{\eta\sr_{i-1}(w,x)}{x} \big( f(w,x) + b(w,x)u \sr(w,x)\big),\qquad i=2,\dots,d
%\end{aligned}
%\end{equation} 
%in which $\eta\sr_1$ is a solution to \eqref{e:eq_eta1_star} and $u\sr$, defined as
%\begin{equation*}
%u\sr(w,x) := \cL\left( \cK_\xi\begin{pmatrix}
%0\\\xi^a(w,x)
%\end{pmatrix}+  \cK_\zeta\begin{pmatrix}
%0\\\zeta^a(w,x)
%\end{pmatrix} + \cK_\eta\eta_1\sr(w,x) \right),
%\end{equation*}
%coincides with the ideal control action keeping $e$ to zero in view of \eqref{e:eq_eta1_star}. 
%

The existence of  functions $\eta_1\sr(w,x)$ solving \eqref{e:eq_eta1_star}  in $\cO$ and such that $\eta_1\sr(t)$ is $d$-times differentiable, and thus the existence of the ideal steady state $\eta\sr$ for $\eta$, follows by the  assumptions below (see also the forthcoming Remark \ref{rmk:eta1}).
\begin{assumption}\label{ass:etastar}
	The functions $s$, $f$ and $b$ are $\cC^d$.
\end{assumption}
\begin{assumption}\label{ass:Dee}
	There exists a map $\cM:\R^\dw\x\R^\dx \to \R^{\de\x \de}$, $\cC^1$ on an open set including $\cO$,   such that:
	\begin{enumerate}
		\item [a.] $\cM(w,x)>0$ in $\cO$.
		\item [b.] For all $(w,x)\in \cO$ and all $u\in\R^\du$,\[  L_{b(w,x)u}^{(x)}\cM(w,x)   = 0.\]
		\item[c.] For all $(w,x)\in \cO$
		\begin{equation*}
		D^{e,e}(w,x)^\top \cM(w,x) + \cM(w,x)D^{e,e}(w,x) \ge I.
		\end{equation*}
	\end{enumerate}
\end{assumption}
\begin{remark}
	Assumption \ref{ass:Dee} is a controllability condition resembling those of Assumption \ref{ass:P_L}, but limited to the submatrix $D^{e,e}(w,x)$ inside the attractor $\cO$. 
	In addition to Assumption \ref{ass:P_L}, this condition ensures that the regulation error contracts to zero when the internal model reaches its ideal steady state.	While in the case in which $y=e$ this condition is directly implicated by Assumption \ref{ass:P_L}, in general it is not and it has to be assumed. We stress that, as it is the case for $\cP$ in Assumption \ref{ass:P_L}, the matrix $\cM$ is not used in the construction of the control law, and it is thus not required to be known by the designer. In the forthcoming Section \ref{sec:P}, we complement this assumption by showing that it holds, together with Assumption \ref{ass:P_L}, in many state-of-art relevant frameworks of high-gain regulation and stabilization of multivariable systems.
\end{remark}
\begin{remark}\label{rmk:eta1}
	 Assumption \ref{ass:Dee} implies that $D^{e,e}(w,x)$ is nonsingular in $\cO$. This, together with Assumption \ref{ass:etastar} and the fact that $\cK_\eta'$ can be taken invertible, implies that \eqref{e:eq_eta1_star} admits a unique solution in $\cO$ such that $\eta_1\sr(t)$ is $C^d$ for each solution $(w,x,\eta)$ of $\Sigma_{cl}^\cA$.
\end{remark}

\subsection{Asymptotic Performances}
Apart from the chicken-egg dilemma, fulfilling condition \eqref{IMp_t} in general requires a very precise knowledge of the function $\eta\sr_1$ and its derivatives, which strongly depend on the plant and exosystem dynamics. Therefore, uncertainties in the model of the plant and exosystem strongly reflect into uncertainties in the right internal model function to implement, potentially ruining the asymptotic performance of the regulator~\cite{Bin2018CDC}. 
This motivates introducing, for each solution $(w,x,\eta)$ to $\Sigma_{cl}^\cA$, the following quantity 
\begin{equation*}%\label{1n:d:delta}
\delta(t) := \phi\big(\eta\sr(t)\big) - \eta\sr_1{}^{(d)}(t),
\end{equation*}
representing the  \textit{internal model mismatch} along $(w,x,\eta)$, i.e. the   error that the system $\eta$ of \eqref{s:eta} attains  in modeling the process that generates the ideal control action $\eta\sr_1(t)$. We also define the worst-case mismatch as the quantity
\begin{equation}\label{d:bardelta}
\bar \delta := \sup\left\{ |\delta|_\infty\suchthat (w,x,\eta)\text{ solution to }\Sigma_{cl}^\cA  \right\},
\end{equation}
which, we stress, in general depends also on the control gains $\cK_\xi$, $\cK_\zeta$ and $\cK_\eta$.

The following theorem, which represents the main result of the paper,  characterizes the   asymptotic properties of the proposed regulator \eqref{s:eta} and, in particular, relates the worst-case internal model mismatch \eqref{d:bardelta} to the asymptotic bound on the regulation error.

\begin{theorem}\label{thm:main}
Suppose that Assumptions \ref{ass:ios_mf}, \ref{ass:P_L}, \ref{ass:etastar} and \ref{ass:Dee} hold, and consider the regulator \eqref{s:eta} with $d>1$ and $\phi$ satisfying \eqref{e:bound_phi} for some $L_\phi>0$, and with $\cL$ given by Assumption \ref{ass:P_L}. Then,  for each pair of compact sets $X\subset\R^\dx$ and $H\subset\R^\deta$ of initial conditions and each $\varepsilon>0$, there exist  matrices $\cK_\xi,\,\cK_\zeta,\,\cK_\eta$ and $G$, and a compact set $\cA$ of the form \eqref{d:cA}, such that every solution to the closed-loop system $\Sigma_{cl}$  originating in $W\x X\x H$ is uniformly attracted by $\cA$  and  satisfies\begin{equation}\label{e:bound_e}
\limsup_{t \to \infty}|e(t)|\le \varepsilon \bar\delta 
\end{equation}
uniformly in the initial conditions.
\end{theorem}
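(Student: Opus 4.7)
The strategy is to first invoke Proposition \ref{prop:boundedness} to secure a compact attractor $\cA$, then pass to error coordinates measuring the deviation from the ideal internal model steady state $\eta^\star$, and finally push the gain parameters so that this error (and hence $e$) shrinks to a ball whose radius is $\varepsilon\bar\delta$.

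First, for the given compact set $W\x X\x H$ I would apply Proposition \ref{prop:boundedness} with a family of matrices $(\cK_\xi,\cK_\zeta,\cK_\eta,G)$ parametrized by a high-gain scalar $\kappa\ge 1$, using a base selection known to enforce boundedness; this yields the attractor $\cA$ and the projected set $\cO$. By Remark \ref{rmk:eta1}, Assumptions \ref{ass:etastar} and \ref{ass:Dee} guarantee that \eqref{e:eq_eta1_star} admits a unique $\cC^d$ solution $\eta_1^\star:\cO\to\R^{\de}$, so that along every solution of $\Sigma_{cl}^\cA$ the ideal trajectory $\eta^\star$ in \eqref{d:etasr} is well-defined and $\eta_1^{\star(d)}$ is continuous; compactness of $\cA$ and the Lipschitz property \eqref{e:bound_phi} then make $\bar\delta$ in \eqref{d:bardelta} finite.

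Next, I would rewrite the closed-loop dynamics restricted to $\cA$ in the error coordinate $\tilde\eta:=\eta-\eta^\star$. Subtracting the identities $\dot\eta_i^\star=\eta_{i+1}^\star$ for $i=1,\dots,d-1$ and $\dot\eta_d^\star=\eta_1^{\star(d)}$ from \eqref{s:eta} produces
\[
\dot{\tilde\eta}=\Phi_L\tilde\eta+\mathbf{e}_d\bigl(\phi(\eta)-\phi(\eta^\star)+\delta(t)\bigr)+Ge,
\]
where $\Phi_L$ is the linear shift, $\mathbf{e}_d$ is the last $\de$-block indicator, and the second term is Lipschitz in $\tilde\eta$ with constant $L_\phi$. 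Substituting \eqref{e:eq_eta1_star} into \eqref{e:zeta_e_dot} cancels every steady-state contribution and leaves
\[
\dot\zeta^e=D^{e,e}(w,x)\bigl(\cK_\xi^{e,e}\xi^e+\cK_\zeta^{e,e}\zeta^e+\cK_\eta'\tilde\eta_1\bigr),
\]
while $\dot\xi^e$ retains its shift-like structure driven by $\zeta^e$, and the terms in $\xi^a,\zeta^a$ are absorbed into the definition of $\eta_1^\star$. Hence the regulation error obeys an autonomous perturbed cascade in $(\xi^e,\zeta^e,\tilde\eta)$, with $\delta(t)$ entering only through the last block of $\dot{\tilde\eta}$.

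The core of the proof is then a cascaded high-gain Lyapunov argument built on Assumption \ref{ass:Dee}. I would design
\[
V=\kappa^{2d+1}V_{\tilde\eta}(\tilde\eta)+\kappa^{2}V_{\xi^e}(\xi^e)+\kappa\,(\zeta^e)^{\top}\cM(w,x)\zeta^e,
\]
with $V_{\tilde\eta},V_{\xi^e}$ standard quadratic forms for the associated linear shift subsystems and the same scalar $\kappa$ driving the high-gain parametrization of $\cK_\zeta^{e,e}$, $\cK_\xi^{e,e}$, $\cK_\eta'$, and $G$. The crucial facts are that $L_b^{(x)}\cM=0$ (Assumption \ref{ass:Dee}.b), so $\cM$ contributes no spurious input term when differentiating $(\zeta^e)^{\top}\cM\zeta^e$ along \eqref{s:wx}, and that $D^{e,e}{}^{\top}\cM+\cM D^{e,e}\ge I$ on $\cO$ (Assumption \ref{ass:Dee}.c), providing the dominant negative definite term. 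A routine Young-inequality bookkeeping then shows
\[
\dot V\le -c_1\kappa\,V+c_2\,\delta(t)^2
\]
for constants independent of $\kappa$, whence $\limsup_{t\to\infty}V(t)\le(c_2/c_1\kappa)\bar\delta^{2}$ and $\limsup_{t\to\infty}|e(t)|\le c_3\kappa^{-1/2}\bar\delta$. Picking $\kappa$ so that $c_3\kappa^{-1/2}\le\varepsilon$ yields \eqref{e:bound_e}.

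The main obstacle will be executing this Lyapunov computation while retaining the outer boundedness supplied by Proposition \ref{prop:boundedness}: the coupling term $Ge$ in $\dot{\tilde\eta}$, the $\cK_\eta'\tilde\eta_1$ term in $\dot\zeta^e$, and the Lipschitz remainder $\phi(\eta)-\phi(\eta^\star)$ all produce cross-terms that must be dominated through the hierarchical $\kappa$-scaling. This is the familiar high-gain separation obstacle, and it is what forces the proof to be nonsequential: the final selection of $(\cK_\xi,\cK_\zeta,\cK_\eta,G)$ used here must be consistent with the boundedness gains from Proposition \ref{prop:boundedness} for the same $X,H$. The chicken-egg dilemma is harmless for the quantitative statement, since $\bar\delta$ is defined on whatever attractor arises from the final tuning and the bound \eqref{e:bound_e} is stated relative to that same $\bar\delta$; only the contraction ratio $\varepsilon$ needs to be driven to zero, and this is achieved independently of the particular value of $\bar\delta$.
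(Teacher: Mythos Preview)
Your overall strategy---invoke Proposition~\ref{prop:boundedness}, pass to error coordinates relative to $\eta^\star$, exploit Assumption~\ref{ass:Dee} for the $\zeta^e$ contraction, and close via high gain---is exactly the plan the paper follows. The technical execution, however, diverges from the paper in a way that leaves a genuine gap.

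The paper does \emph{not} work in the raw coordinates $(\xi^e,\zeta^e,\tilde\eta)$. It first passes to $\chi=\xi+C_e^\top\eta_1$ and $\bar\zeta=\zeta-K\chi$ (the same coordinates used in the proof of Proposition~\ref{prop:boundedness}), then shifts once more to $\tilde\chi=\chi-C_e^\top\eta_1^\star$, $\tilde\zeta=\bar\zeta+KC_e^\top\eta_1^\star$, and $\tilde\mu=\Delta(g)^{-1}\tilde\eta$. The point of these changes is that $\tilde\chi$ satisfies \emph{exactly} the same structural equation as $\chi$, so Lemma~\ref{lem:kappa} applies verbatim and gives ISS bounds for $\tilde\chi^e$ and $\tilde e$; the $\tilde\mu$ equation is handled by a separate high-gain argument with the Lipschitz remainder bounded by $L_\phi g^{d-1}|\tilde\mu|+\bar\delta$; and $\tilde\zeta^e$ is shown to satisfy $\dot{\tilde\zeta}^e=\varrho^e(\tilde\chi^e,\tilde\zeta^e,\tilde\eta)-\ell D^{e,e}(w,x)\tilde\zeta^e$, to which a Lyapunov argument with $V_e=\sqrt{\tilde\zeta^{e\top}\cM\tilde\zeta^e}$ applies. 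The three resulting ISS estimates are then combined by small-gain, with the hierarchy $g$, then $K(g)$, then $\ell(g,K)$.

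Your single combined Lyapunov function in $(\xi^e,\zeta^e,\tilde\eta)$ cannot produce the claimed inequality $\dot V\le -c_1\kappa V+c_2\delta^2$ with $\kappa$-independent constants as written. The $\xi^e$ subsystem $\dot\xi^e=F_{ee}\xi^e+H_{ee}\zeta^e$ is an integrator chain: no ``standard quadratic form'' $V_{\xi^e}$ yields a negative term of order $\kappa$ from this equation alone, because nothing in $\dot\xi^e$ carries the high-gain parameter. The stabilization of $\xi^e$ comes only through the backstepping shift $\bar\zeta=\zeta-K\chi$, which injects the Hurwitz term $HK$ into the $\chi$-dynamics; without that shift, the cross term $H_{ee}\zeta^e$ is merely a coupling, not a damping, and the weights $\kappa^{2d+1},\kappa^2,\kappa$ do not untangle it. A related symptom: your displayed $\dot\zeta^e$ equation drops the term $D^{e,a}(w,x)(\cK_\xi^{a,e}\xi^e+\cK_\zeta^{a,e}\zeta^e)$, which only vanishes because the paper's $K$ is block-diagonal---a structural choice you have not made.

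Finally, even once a contraction estimate on $\cA$ is obtained, the paper needs an explicit backward-invariance argument (the sequence $t_n\to\infty$ and backward solutions $(w^n,x^n,\eta^n)$) to convert the $\limsup$ bound along trajectories of $\Sigma_{cl}^\cA$ into a pointwise bound $|h_e(w,x)|\le 4q_4 g^{1-d}\bar\delta$ valid for every $(w,x)\in\cO$; only then does uniform attractiveness of $\cA$ from $W\times X\times H$ transfer \eqref{e:bound_e} to arbitrary initial conditions. Your sketch omits this step.
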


%According to the proof of Proposition \ref{prop:postproc}, which is postponed to Section \ref{sec:proof}, the degrees of freedom that characterize the regulator \eqref{s:eta} are chosen by following a ``high-gain'' strategy. 
%The only parameters that remain truly arbitrary are the order $d$ of the internal model and the bound $C_\phi$ on $\phi$. Lower values of $C_\phi$ yield lower values of the high-gain parameters, but reduce the ``representation capabilities'' of the internal model unit, thus potentially increasing $\delta(w,x)$. 
%The functions $\upsilon$ and $\upsilon_{d+1}$, on the other hand, are obtained by the recursion \eqref{e:upsilon1}, \eqref{e:upsilon} and, hence, they are strongly dependent on the plant's and the exosystem's functions $s$, $f$ and $g$, and  on the control parameters $\cK_\xi$, $\cK_\zeta$ and $\cK_\eta$. Therefore, their perfect knowledge cannot be assumed while fixing $C_\phi$, and this dependence between the internal model unit and the stabilizer's parameters is a manifestation of the \textit{``chicken-egg'' dilemma}\cite{Bin2018b}.
 
Theorem \ref{thm:main}, proved in Section \ref{sec:pf_thm}, claims that the stabilization parameters in \eqref{s:eta} can be chosen to ensure the existence of a compact attractor for the closed-loop system \eqref{s:cl}, and that the asymptotic bound on the regulation error is proportional to the worst-case internal model mismatch on the attractor. The claim of Theorem \ref{thm:main} is thus an \emph{approximate} regulation result. Furthermore, if the worst-case internal model mismatch \eqref{d:bardelta} does not depend on the control parameters $\cL$, $\cK_\xi$, $\cK_\zeta$, $\cK_\eta$ and $G$, then the  asymptotic bound on the error  can be reduced arbitrarily by opportunely choosing the control parameters   to lower the proportionality constant $\varepsilon$ in \eqref{e:bound_e}. This, in turn, makes the claim of the theorem a  \textit{practical} output regulation result. 
Since $X$ and $H$ are arbitrary, moreover, the result is \textit{semiglobal} in the plant's and internal model's initial conditions. We remark that, as evident in \eqref{e:eq_eta1_star}, in the canonical ``square'' setting in which $b(w,x)=0$,  $\dya=0$ (i.e. $y=e$) and $\du=\de$, the map $\eta_1\sr(w,x)$, its derivatives, and hence $\bar\delta$, can be always bounded uniformly in the control parameters, thus making the result of Theorem \ref{thm:limit} always practical.

Moreover, we observe that if the worst-case internal model mismatch \eqref{d:bardelta} is zero, i.e. the internal model includes a copy of the process generating the ideal steady state $\eta\sr$ for \eqref{s:eta}, then \emph{asymptotic regulation} is achieved, that is,
\begin{equation*}%\label{e:ar}
\lim_{t\to\infty} e(t) = 0.
\end{equation*}
holds uniformly.
The intertwining between internal model and the stabilizer and the possible presence of model's uncertainties make    $\bar\delta=0$ very difficult to satisfy, thus making the conditions of asymptotic regulation de facto not constructive.  Nevertheless, Theorem \ref{thm:main} individuates a clear sufficient condition for asymptotic regulation, given by equation \eqref{IMp_t}, which expresses in this setting the nonlinear version of the \textit{internal model property}: the regulator must embed a copy of the process that generates the ideal steady state control action (i.e. $\eta_1^\star$ above) that makes the set in which the error vanishes invariant. 

Finally, we observe that this chicken-egg dilemma is way more evident when nonvanishing outputs are used for stabilization, as they need to be compensated at the steady state by the output of the internal model (see \eqref{e:eq_eta1_star}). In this respect, we also note that, as it is the case of the linear regulator, the feedback of auxiliary outputs might also have a simplifying effect on the internal model.

\section{Example} 
In this section, we present an example showing how the postprocessing paradigm presented in the previous sections can be used to approach a problem for which no solution of the preprocessing type is known. We consider the system
\begin{equation}\label{s:exy:plant}
\begin{array}{lcl}
\dot w &=& s(w)\\
\dot x_1 &=& f_1(x_1) + \gamma_1(w,x_2) + x_3 \\
\dot x_2 &=& \gamma_2(w,x) +  u_1 + u_2 \\
\dot x_3 &=& \gamma_3(w,x)  -b(w) u_1+(1-b(w))u_2
\end{array}
\end{equation}
with regulation error
\[
e:=x_2,
\]
in which all the functions $f_1$, $\gamma_i$ and $b$ are smooth, and where $w$ ranges in a compact invariant set $W\subset\R^\dw$. We observe that if  the function  $\gamma_2$   satisfies $\partial \gamma_2(0)/ \partial x_i = 0$ for $i=1,3$, the linear approximation of \eqref{s:exy:plant} at $0$ is not detectable from $e$. Thus, $e$ is not enough to stabilize \eqref{s:exy:plant}, and additional outputs are needed. We specifically assume to have available for feedback the other two variables, i.e. the additional output
\[
y_{a} := \col(x_1,x_3).
\]
We observe that $y_a$ does not necessarily   vanish at the ideal steady state in which $e=0$. We also observe that a control strategy based on a preliminary ``pre-stabilizing'' inner-loop employing $y_{a}$ to reduce to a canonical case in which the pre-stabilized plant is stabilizable from $e$   is hard to imagine. In fact the inputs $u_1$ and $u_2$ affect both the equations of $\dot x_2$ and $\dot x_3$, and $x_3$ has no relative degree with respect to any of the two inputs, since both $b(w)$ and $1-b(w)$ may vanish. Therefore,  both $u_1$ and $u_2$ must be used in case $x_3$ is pre-stabilized, thus leaving no further degree of freedom to deal with $e$. 
As a consequence, this case does not fit into any of the previous preprocessing frameworks in which only $e$ can be used for feedback.

In the rest of the section we build a regulator of the form \eqref{s:eta} dealing with this case. We suppose to know a function $\kappa$   such that\footnote{We observe that if $f_1$ is locally Lipschitz, then $\kappa$ can be taken linear over compact sets.}
\begin{equation}\label{e:exy:kappa}
s\big( f_1(s) -\kappa(s)\big)\le 0, \qquad\forall s \in\R 
\end{equation}	
and we define the variable $\zeta:=\col(\zeta_1,\zeta_2)$ as
\begin{align*}
\zeta_1&:=x_2,&\zeta_2 &:= x_3+x_1+\kappa(x_1),
\end{align*}
which satisfy
\begin{equation}\label{s:exy:zeta} 
\dot\zeta  = q(w,x) + B(w,x) u 
\end{equation}
with 
\begin{equation*}
\begin{aligned} 
q(w,x)&:= \begin{pmatrix}
\gamma_2 (w,x )\\
\gamma_3(w,x)+\big(1+\kappa'(x_1)\big)\big(f_1(x_1)+\gamma_1(w,x_2)+x_3\big)
\end{pmatrix},& 
B(w,x)&:=\begin{pmatrix}
1&1\\ -b(w) & 1-b(w)
\end{pmatrix}.
\end{aligned}
\end{equation*} 
Hence, \eqref{s:exy:plant}-\eqref{s:exy:zeta} has the form \eqref{s:wx}, \eqref{s:zeta}. We further notice that $|(x_2,x_3)|\le  |\zeta|+|x_1+\kappa(x_1)|$, with $x_1$ that fulfills
\begin{equation*}
\dot x_1 = -x_1 + f_1(x_1)-\kappa(x_1) + \gamma_1(w,\zeta_1)+ \zeta_2.
\end{equation*}
In view of \eqref{e:exy:kappa}, simple computations show that the $x_1$ subsystem is input-to-state stable relative to the origin and with respect to the input $\gamma_1(w,\zeta_1)+\zeta_2$, and hence Assumption \ref{ass:ios_mf} is fulfilled.

With $\alpha>0$ satisfying $4\alpha> 1+\sup_{w\in W}(1+b(w))^2$, let 
\begin{align*}
\cL_0&:= \begin{pmatrix}
\alpha^2&0\\0&\alpha
\end{pmatrix}, & \cP(w,x)&:=\begin{pmatrix}
1+b(w)^2&b(w)\\b(w)&1
\end{pmatrix}.
\end{align*} 
Then the high-frequency matrix $B(w,x)$	fulfills
\begin{equation*}%\label{e:exy:BLS} 
\cL^\top_0B(w,x)^\top\cP(w,x)+\cP(w,x)B(w,x)\cL_0   =\begin{pmatrix}
2\alpha^2&\alpha(1+b(w))\\ \alpha(1+b(w))&2\alpha 
\end{pmatrix}=: M(w). 
\end{equation*}
As $2\alpha^2>0$ and 
\begin{equation*}
\det M(w)  = 4\alpha^3-\alpha^2(1+b(w))^2 = \alpha^2(4\alpha-(1+b(w))^2) >4\alpha^2 + \alpha^2\big( 4\sup_{w\in W}(1+b(w))^2-(1+b(w))^2\big) >4\alpha^2,
\end{equation*} 
$M(w)$ is positive definite and there exists $m>0$ such that, for all $x\in\R^2$, 
\begin{equation}\label{e:exy:Mm}
x^\top M(w)x\ge  m|x|^2.
\end{equation}
Therefore the pair $(\cL,\cP)$, with $\cL:=\cL_0/m$, satisfies Assumption \ref{ass:P_L} for each compact subset   $X\subset\R^\dx$.

Moreover, Assumption \ref{ass:etastar} holds by construction, and Assumption \ref{ass:Dee} holds with $\cM=1$, since the quantity $D(w,x)=B(w,x)\cL$ defined as in \eqref{d:D_K} satisfies
\begin{equation*}
D^{e,e}(w,x) = \alpha^2.
\end{equation*}   

According to Section \ref{sec:regulator}, and by following the proof of Proposition \ref{prop:boundedness} and Theorem \ref{thm:main}, the regulator has the form \eqref{s:eta}, with the stabilizing action given by
\begin{equation*}%\label{ex:u}
u =    -\ell \dfrac{\alpha}{m}\begin{pmatrix}
\alpha (e-\eta_1)\\
x_3+x_1+\kappa(x_1) 
\end{pmatrix}
\end{equation*}
in which $\ell>0$ is a design parameter to be taken sufficiently large and $\eta_1$ is the first component of the internal model unit
\begin{equation*}
\begin{array}{lcl}
\dot\eta_i &=& \eta_{i+1} + g^i h_i e.\quad i=1,\dots,d-1\\
\dot \eta_d &=& \phi(\eta) + g^dh_d e
\end{array}
\end{equation*}
in which the parameters $h_i$ are the coefficients of a Hurwitz polynomial, $g>0$ is a high-gain parameter, and the dimension $d$ and the $\phi$ are degrees of freedom to be fixed according to the available knowledge of $s(w)$ and of the ideal steady state of the internal model given by \eqref{d:etasr}. According to Proposition \ref{prop:boundedness}, for all sufficiently large $\ell$ and $g$ the closed-loop system has a compact attractor, on which $d$ and $\phi$ must be tuned.
In particular,   $\eta_1\sr$ is obtained as in \eqref{e:eq_eta1_star} and by imposing in \eqref{s:exy:plant} $e=0$ and $\eta=\eta\sr$, thus obtaining
\begin{equation}\label{ex:etasr}
\begin{aligned}
\eta_1\sr &= -\dfrac{m}{\ell\alpha^2} \gamma_2(w,(x_1,0,x_3)) + \dfrac{1}{\alpha}(x_3+x_1+\kappa(x_1))\\
\dot x_1 &= f_1(x_1)+\gamma_1(w,0)+x_3\\
\dot x_3 &= \gamma_3(w,(x_1,0,x_3)) -b(w) \gamma_2(w,(x_1,0,x_3)).
\end{aligned}
\end{equation}
In the following simulation, we set
\begin{align*}
\dw&=2, & s(w)&=\begin{pmatrix}
0 & 1\\-1 & 0
\end{pmatrix}w, & W&=\ball_3, &
f_1(x_1)&=0, & \kappa(x_1)&=0, \\ \gamma_1(w,x_2)&=0, &\gamma_2(w,x) &= q \exp(2w_1^2),&  \gamma_3(w,x)&=w_1^2, & b(w)&=w_1, & \alpha&= 5
\end{align*}
with $q\ge0$. Then \eqref{e:exy:Mm} holds with $m=1$, and Equations \eqref{ex:etasr} reduce to
 \begin{equation*}%\label{ex:etasr2}
 \begin{aligned}
 \eta_1\sr &= -\dfrac{1}{\ell\alpha^2} q\exp(w_1^2) + \dfrac{1}{\alpha}(x_3+x_1),&
 \dot x_1 &= x_3,&
 \dot x_3 &= w_1^2 -q w_1\exp(2w_1^2).
 \end{aligned}
 \end{equation*}
We first consider the case in which $q=0$. It can be shown that, in this case, $\eta_1\sr$ can be generated by a linear system of dimension $d=5$ and of the form \eqref{s:eta_autonomous} in which
\begin{equation*}
\phi(\eta) = -4\eta_4.
\end{equation*}
Figure \ref{fig:1} shows a simulation in this first case in which $q=0$, $d=5$ and $\phi$ is chosen as above. The system originates at $w(0)=(1,0)$, $x(0)=(3,5,-2)$ and $\eta(0) = 0$,  and   the  control gains are chosen as $g=5$ and $\ell=5$.

In the second case in which $q$ is set to $1$, $\eta_1\sr$ cannot be generated by a  linear system of dimension $5$ and thus asymptotic regulation is not achieved. In this setting, however, the internal model mismatch \eqref{d:bardelta} can be bounded uniformly in the control parameters, and practical regulation is achieve.
Figure \ref{fig:2} thus shows   a simulation obtained with the same regulator in the case in which $q=1$ and $g=5,8,10$, by showing that the asymptotic error can be reduced by increasing $g$. 

\begin{figure}[t]
	\centering
	\begin{subfigure}{.5\linewidth}
		\centering
		\includegraphics[width=\linewidth]{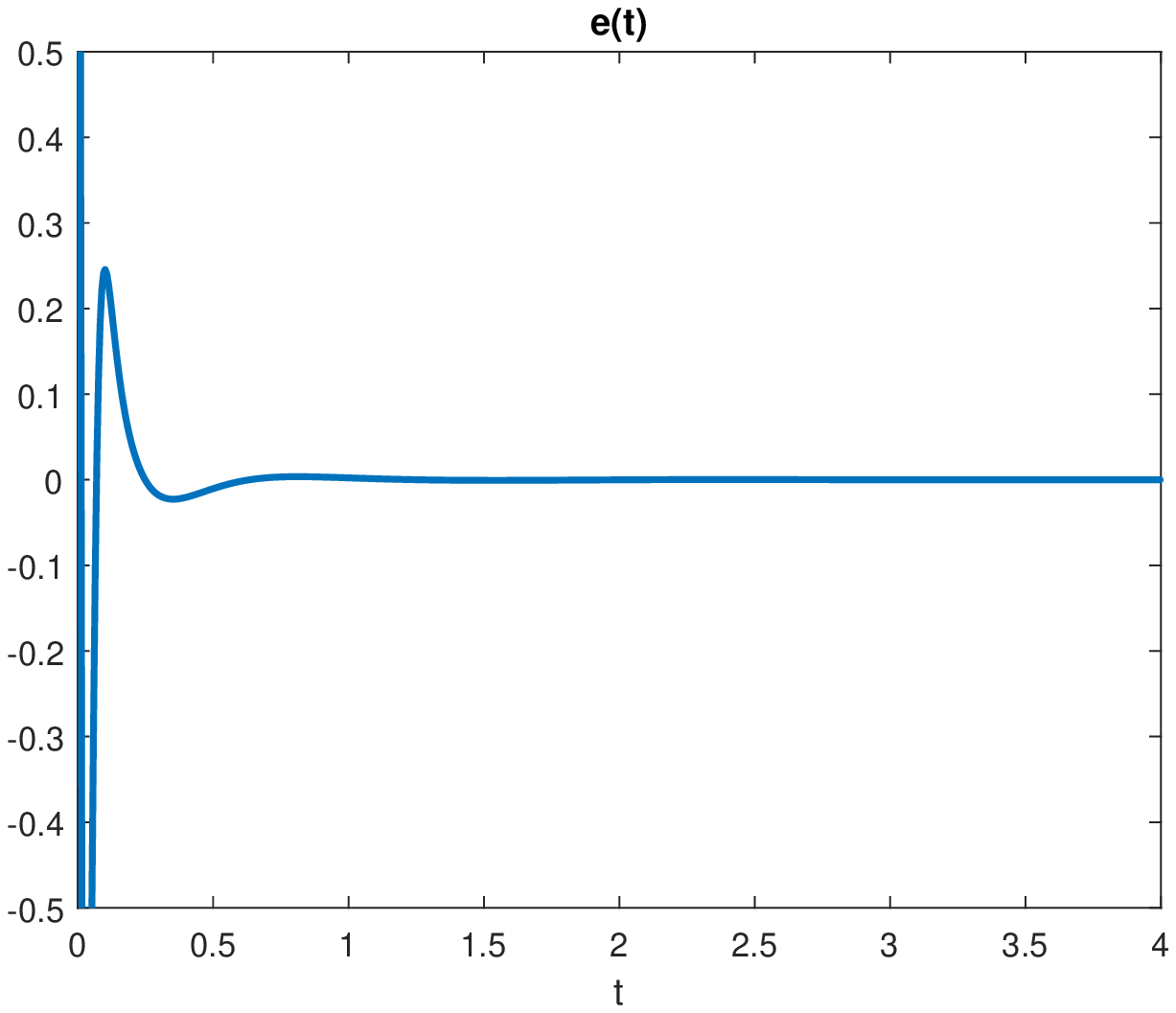}
		\label{fig:f1}
	\end{subfigure}%
	\begin{subfigure}{.5\linewidth}
		\centering
		\includegraphics[width=\linewidth]{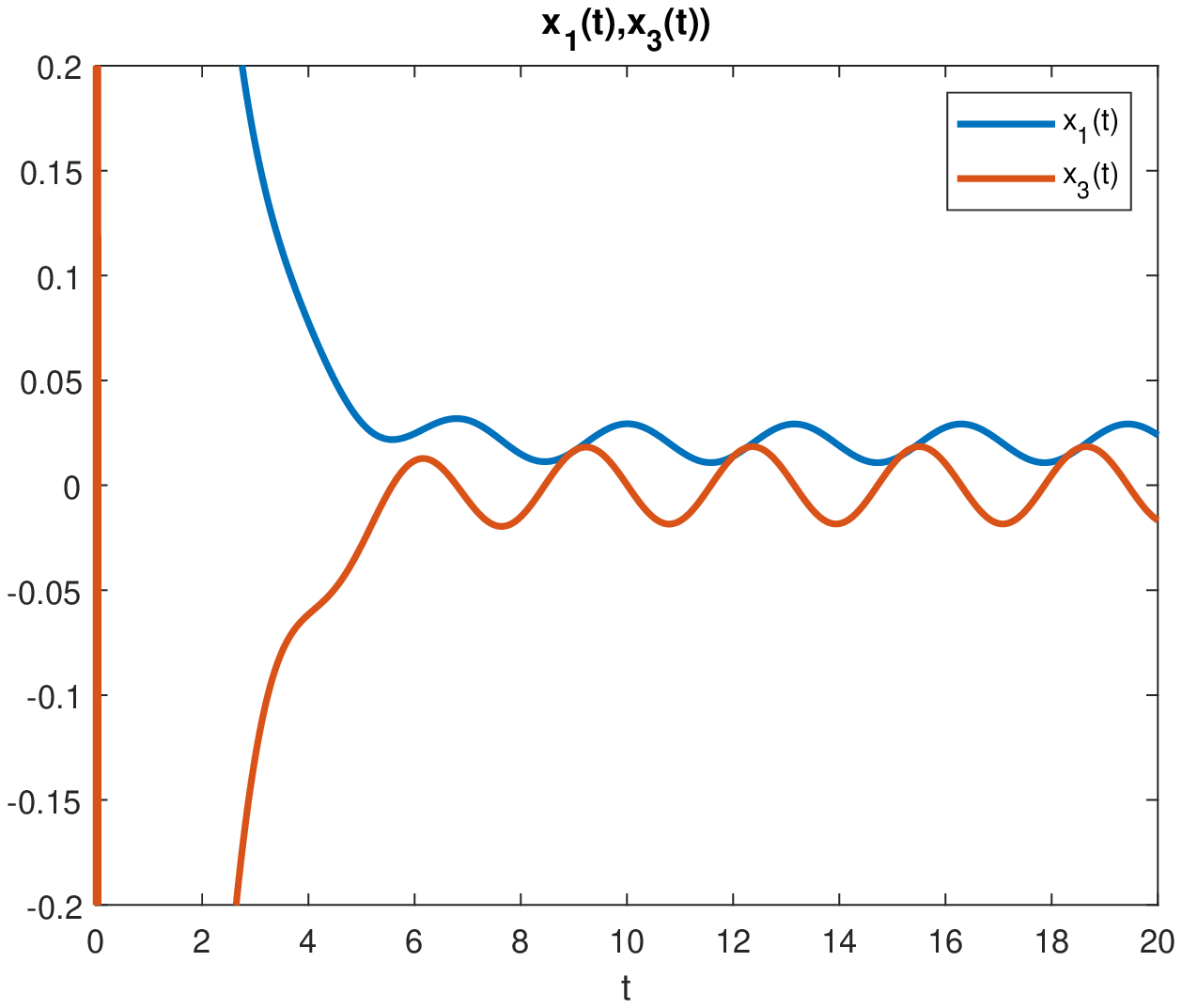} 
		\label{fig:f2}
	\end{subfigure}
	\caption{The left plot shows the time evolution of the regulation error $e(t)=x_2(t)$ which is asymptotically vanishing. The right plot shows the time evolution of the other two measurements $y_a(t)=(x_1(t),x_3(t))$ that have a non-zero steady state.}
	\label{fig:1}
\end{figure}

\begin{figure}[t]
	\centering  
		\includegraphics[width=\textwidth,clip,trim=4.8em 0 3.6em 0]{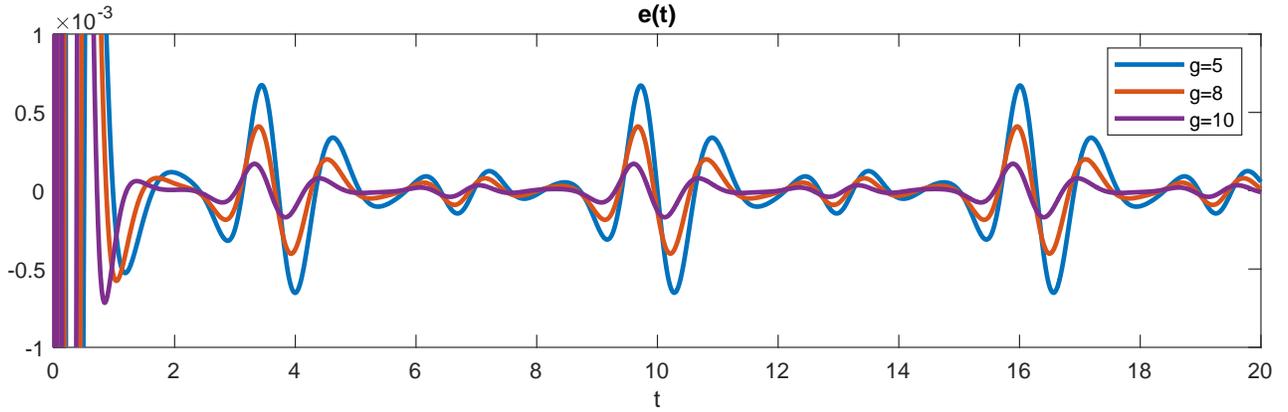} 
	\caption{Time evolution of the regulation error $e(t)=x_2(t)$ for $g=5,8,10$.}
	\label{fig:2}
\end{figure}

\section{Proofs}

\subsection{Proof of Proposition \ref{prop:boundedness}}\label{sec:proof_prop_boundedness}

Partition $\eta$ as $\eta=\col(\eta_1,\dots,\eta_d)$ and, for each $\eta_i\in\R^{\de}$,   partition $\eta_i$ as $\eta_i=\col(\eta_i^1,\dots,\eta_i^{r_e})$ with $\eta_i^j\in\R^{p_j}$. Consider the   change of coordinates: 
\begin{equation}\label{d:chi_components}
\begin{aligned}
\forall i=1,\dots, r_e :&\quad \left\{\begin{aligned}
\xi^i_1 &\mapsto  \chi^i_1 := \xi^i_1 +  \eta_1^i,\\ \xi^i_j&\mapsto  \chi^i_j := \xi^i_j, \qquad j=2,\dots, N_i-1
\end{aligned}\right.\\
\forall i=r_e,\dots, r :&\qquad \xi^i\mapsto  \chi^i:=\xi^i.\\
&\qquad e\mapsto  \bar e := e+\eta_1.
\end{aligned}
\end{equation}
By letting $\chi:=\col(\chi^1,\dots,\chi^{r})$, with $\chi^i:=\col(\chi^i_1,\dots,\chi^i_{N_i-1})$, the change of variables \eqref{d:chi_components} can be compactly rewritten as
\begin{equation*}%\label{d:chi}
\begin{array}{lcl}
\chi &:=& \xi + C_e^\top  \eta_1\\
\bar e &=& C_e\chi . 
\end{array}
\end{equation*}
From \eqref{s:eta}, and since by construction $FC_e^\top =0$, we obtain 
\begin{equation}\label{s:chi}
\dot\chi  = (F+C_e^\top G_1C_e)\chi + H\zeta + C_e^\top (\eta_2 - G_1\eta_1).
\end{equation} 
We state now the following lemma, proved in the Appendix.
\begin{lemma}\label{lem:kappa}
	For any  $\epsilon>0$, there exists $K\in\R^{\dy\x (N-\dy)}$, class-$KL$ functions $\beta_\chi, \beta_{\bar e}$ and $a_1>0$, such that the system \eqref{s:chi} with output $\bar e$ and with input $\zeta=\bar\zeta + K \chi$, being $\bar\zeta\in\R^\dy$ an auxiliary input, satisfies
	\begin{align}
	&|\chi^i(t)|\le \beta_\chi( |\chi(0)|,t/\epsilon) + a_1 \Big( |\bar \zeta_{[1,i]}|_{[0,t)}  +   |G_1||\eta_{1}|_{[0,t)}+|\eta_2|_{[0,t)}\Big)\label{e:lemma_k_claim_iss}\\
	&|\bar e(t)| \le    \beta_{\bar e}(| \chi(0)|,t/\epsilon) + \epsilon |\bar \zeta^e|_{[0,t)} +   \epsilon\Big(|G_1||\eta_1|_{[0,t)}+|\eta_2|_{[0,t)}\Big) \label{e:lemma_k_claim_oiss}
	\end{align} 
	for all $i=1,\dots,r$   and with  $\bar\zeta^e:=\bar\zeta_{[1,r_e]}$.
\end{lemma}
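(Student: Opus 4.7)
The plan is to design $K$ as a block-diagonal high-gain feedback with gain $\ell:=1/\epsilon$, preserving the block-lower-triangular cascade structure of $(F,H)$. Since $\bar e = C_e\chi$ picks only the ``top'' components $\chi^i_1$ of the first $r_e$ chains, while the disturbances $\bar\zeta$ and the $\eta$-terms enter through $H$ at the ``bottom'' of each chain, the bottom-to-top transfer through the integrator chain provides the $\epsilon$-proportional factor in \eqref{e:lemma_k_claim_oiss}.

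Concretely, for each $i$ pick matrices $\bar k_{i,j}\in\R^{p_i\times p_i}$ ($j=1,\ldots,N_i-1$) such that $A^c_i := F_{ii} - H_{ii}\begin{pmatrix}\bar k_{i,1}&\cdots&\bar k_{i,N_i-1}\end{pmatrix}$ is Hurwitz, and set $K_i := -\begin{pmatrix}\ell^{N_i-1}\bar k_{i,1}&\cdots&\ell\, \bar k_{i,N_i-1}\end{pmatrix}$, $K:=\diag(K_1,\ldots,K_r)$. Introduce scaled variables $\tilde\chi^i_j := \ell^{N_i-1-j}\chi^i_j$, so that the ``diagonal'' dynamics of chain $i$ reduce to $\dot{\tilde\chi}^i = \ell A^c_i \tilde\chi^i + H_{ii}\bar\zeta^i$. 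The off-diagonal couplings from $F_{ij}, H_{ij}$ ($j<i$) inject cascade perturbations depending on $\tilde\chi^1,\ldots,\tilde\chi^{i-1}$, while the perturbation $C_e^\top G_1 C_e\chi$ (acting only on the top components $\chi^i_1 = \tilde\chi^i_1/\ell^{N_i-2}$ of chains $i\le r_e$) contributes a lower-order term absorbed by $\ell A^c_i$ for $\ell$ large.

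A quadratic Lyapunov function $V_i(\tilde\chi^i) := \tilde\chi^{i\top}P_i\tilde\chi^i$ with $A_i^{c\top}P_i + P_iA_i^c = -I$ then yields, for $\ell$ sufficiently large relative to $|G_1|$, an inequality of the form $\dot V_i \le -c\ell V_i + (\text{contributions from chains }1,\ldots,i-1,\ \bar\zeta^i,\ \eta_1,\ \eta_2)$. Cascade ISS by induction on $i$ delivers an estimate on $\tilde\chi^i$ of the form $|\tilde\chi^i(t)|\le \beta(|\chi(0)|, t/\epsilon) + (c/\ell)(|\bar\zeta_{[1,i]}|_{[0,t)} + |G_1||\eta_1|_{[0,t)} + |\eta_2|_{[0,t)})$. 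Reverting the scaling via $|\chi^i_j|=|\tilde\chi^i_j|/\ell^{N_i-1-j}\le |\tilde\chi^i_j|$ yields \eqref{e:lemma_k_claim_iss} (with $a_1$ constant, which is admissible since the actual gain is $O(\epsilon)$ and one may take $\epsilon\le 1$ without loss of generality). The tighter bound \eqref{e:lemma_k_claim_oiss} follows from $\chi^i_1 = \tilde\chi^i_1/\ell^{N_i-2}$, which combined with the $O(1/\ell)$ input gain in the $\tilde\chi^i$ estimate produces a net $O(\ell^{-(N_i-1)})\le O(\epsilon)$ gain on $\bar e$, together with the observation that the $\beta$ function in the resulting bound is proportional to $e^{-ct/(2\epsilon)}$, i.e.\ class-$KL$ in $(|\chi(0)|, t/\epsilon)$ independently of $\epsilon$.

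The main obstacle is the careful scaling required to obtain the tight $\epsilon$-gain on $\bar e$ in \eqref{e:lemma_k_claim_oiss} without destroying the full-state estimate \eqref{e:lemma_k_claim_iss}: the choice $\tilde\chi^i_j := \ell^{N_i-1-j}\chi^i_j$ is what ensures $O(1)$ input entries in the scaled dynamics and simultaneously ``deflates'' the top component $\chi^i_1$ by $\ell^{N_i-2}$, producing the $O(\epsilon)$ output attenuation. In parallel, the internal perturbation $C_e^\top G_1 C_e$ must be dominated by the Hurwitz part $\ell A_i^c$, which requires $\ell$ large relative to $|G_1|$; this is admissible since $G_1$ is fixed a priori before the lemma is invoked.
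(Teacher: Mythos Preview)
Your argument has a genuine gap in the cascade step. The claim that ``cascade ISS by induction on $i$ delivers $|\tilde\chi^i(t)|\le \beta + (c/\ell)(\ldots)$'' does not hold when the off-diagonal blocks $F_{ij},H_{ij}$ of the block lower-triangular matrices $F,H$ are nonzero. In the scaled coordinates, the coupling term coming from $H_{ij}\zeta^j$, with $\zeta^j=\bar\zeta^j+K_j\chi^j$ and $K_j\chi^j=-\ell D^j\tilde\chi^j$, reads $\Lambda_i(\ell)H_{ij}\big(\bar\zeta^j-\ell D^j\tilde\chi^j\big)$, whose magnitude can reach $\ell^{N_i-1}|\tilde\chi^j|$ because $|\Lambda_i(\ell)|=\ell^{N_i-2}$. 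After the $O(1/\ell)$ ISS gain of the $\ell A^c_i$ dynamics, the cascade gain from $\tilde\chi^j$ to $\tilde\chi^i$ is therefore $O(\ell^{N_i-2})$, which \emph{grows} with $\ell$. Tracing the $\upsilon:=|G_1||\eta_1|+|\eta_2|$ term (which enters chain $j$ at the top and so is amplified by $\ell^{N_j-2}$) through this cascade and then descaling via $\bar e^i=\ell^{2-N_i}\tilde\chi^i_1$ gives a net $\upsilon\to\bar e^i$ gain of order $\ell^{N_j-3}$, not $O(\epsilon)$, whenever some earlier chain has $N_j\ge 3$. Thus \eqref{e:lemma_k_claim_oiss} cannot be obtained with a single gain $\ell=1/\epsilon$ in the general block lower-triangular setting the lemma addresses.

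The paper's proof avoids exactly this obstruction by assigning to each chain its own gain $k_i$ and choosing them \emph{recursively}, $k_1<k_2<\cdots<k_r$, with $k_i$ taken large relative to an explicit function of $k_1,\dots,k_{i-1}$ (condition \eqref{d:lem_ki}). This nested time-scale separation makes chain $i$ fast enough to absorb the amplified perturbation coming from the slower chains below it, restoring the $\epsilon/r$ attenuation on each $\bar e^i$. Your single-$\ell$ construction is essentially correct only in the block-diagonal case $F_{ij}=0$, $H_{ij}=0$ for $j<i$; to cover the general case you must let the gains differ across chains and build them inductively as the paper does.
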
 
Pick $(h_1,\dots, h_d)\in\R^d$ such that the polynomial $\lambda^d+h_1\lambda^{d-1}+\dots+h_{d-1}\lambda+h_d$ has $d$ roots with negative real part and, with $g>0$ a control parameter, let
\begin{equation} \label{d:Gi}
 G_i := g^i h_i I_{n_e}. 
\end{equation}
Let $\Delta(g):=\diag(1,g,\dots,g^{d-1})$ and change variables as
\begin{equation*}
\eta  \mapsto \mu  := \Delta(g)\inv\eta.
\end{equation*}
In the new variables we obtain
\begin{equation*}
\dot \mu = \Delta(g)\inv\Big( A\Delta(g)\mu + E\phi(\Delta(g)\mu) +  G( \bar e-\Gamma\mu_1)\Big)
\end{equation*}
with $A\in\R^{\deta\x\deta}$,  $E\in\R^{\deta\x\de}$ and $\Gamma\in\R^{\de\deta}$ defined as
\begin{align*}
A&:=\begin{pmatrix}
0_{\de} & I_{\de} & 0_{\de} & \cdots & 0_{\de}\\
0_{\de} &0_{\de}& I_{\de} &     \cdots & 0_{\de}\\
\vdots & & & \ddots & \vdots\\
0_{\de}  &   & \cdots &   & I_{\de}\\
0_{\de}  &   & \cdots &   & 0_{\de}\\
\end{pmatrix}, & E&:=\begin{pmatrix}
0_{\de}\\0_{\de}\\\vdots\\0_{\de}\\I_{\de}
\end{pmatrix}, & \Gamma&:=\begin{pmatrix}
I_\de & 0_\de &\cdots & 0_\de 
\end{pmatrix}.
\end{align*}
Noting that: 
\begin{align*}
\Delta(g)\inv A\Delta(g) &=gA, &
\Delta(g)\inv E &= g^{1-d} E, &
\Delta(g)\inv G &= g R,\ \text{with}\  R:= \col\big( h_i I_{n_e}\suchthat i=1,\dots, d\big), 
\end{align*}
then, by letting $M:=A-R\Gamma$, we obtain
\begin{equation}\label{e:d_mu}
\dot\mu = gM\mu + g^{1-d} E \phi(\Delta(g)\mu) + gRC_e\chi .
\end{equation}
Let $L_\phi>0$ such that \eqref{e:bound_phi} holds. Then 
\begin{equation}\label{bound:muL}
|g^{1-d} E \phi(\Delta(g)\mu)|\le L_\phi |\mu|
\end{equation}
holds for all $\mu\in\R^{d\de}$.
 By definition of $(h_1,\dots,h_d)$, $M$ is Hurwitz. Then, in view of \eqref{bound:muL}, standard high-gain arguments~\cite{Khalil2013} can be used to show that there exists $g\sr>0$, a class-$KL$ function $\beta_{\mu}$ and $a_2>0$, such that, for all $g>g\sr$, the system $\mu$ fulfills
\begin{equation*} 
|\mu(t)|\le \beta_{\mu}(|\mu(0)|,gt)  + a_2 |\bar e|_{[0,t)} .
\end{equation*} 
for all $t\ge 0$. Thus, by letting $\beta_{\eta} = g^{d-1}\beta_{\mu}$, using the fact that $|\eta_i|\le g^{i-1}|\mu_i|$, we also obtain that the system $\eta$ fulfills
\begin{equation}\label{e:bound_eta}
\begin{aligned}
|\eta_i(t)|&\le \beta_{\eta}(|\eta(0)|,t)   + a_2g^{i-1} |\bar e|_{[0,t)} 
%\\|\eta_1(t)|&\le \beta_{\eta}(|\eta(0)|,t) + \dfrac{a_2}{g^d} C_\phi + a_2  |\bar e|_{[0,t)}
\end{aligned}
\end{equation}
for all $i=1,\dots,d$.
With $\epsilon$ being any small number so that
\[
\epsilon < \epsilon^\star_1(g):=\dfrac{1}{a_2(|G_1|+g)},
\]
being $a_2$ the same as in \eqref{e:bound_eta}, let $K=K(g)$ be the corresponding matrix produced by Lemma \ref{lem:kappa}, and change variables as
\begin{equation}\label{1n:d:barzeta}
\zeta\mapsto \bar\zeta:= \zeta - K \chi .
\end{equation}
Then, the bounds \eqref{e:lemma_k_claim_iss}-\eqref{e:lemma_k_claim_oiss} hold with $|\eta_2|_{[0,t)}\le g |\mu|_{[0,t)}$ and, since $\epsilon a_2 (g+|G_1|)<1$, standard small-gain arguments~\cite{Jiang1994} imply the existence of a class-$KL$ function $\beta_{\chi,\eta}$ and $a_3>0$ such that the following bound holds
\begin{equation}\label{bound:chi_eta}
|(\chi(t),\eta(t))|\le \beta_{\chi\eta}(|(\chi(0),\eta(0))|,gt) + a_3|\bar \zeta|_{[0,t)}.
\end{equation}
Moreover, since for any class-$K$ function $\rho$, $\rho(a+b)\le \rho(2a)+\rho(2b)$, Assumption \ref{ass:ios_mf} yields
\begin{equation}\label{bound:x}
\begin{aligned}
|x(t)| &\le \beta(|x(0)|,t) + \rho_0(|w_{[0,t)}|) + \rho_1( |(\xi,\zeta)|_{[0,t)})\\
&\le  \beta(|x(0)|,t) + \rho_0(|w_{[0,t)}|) + \rho_1\big( (1+|K|) |\chi|_{[0,t)} + |\eta_1|_{[0,t)}  + |\bar\zeta|_{[0,t)}  \big)\\
&\le \beta(|x(0)|,t)+\rho_0(|w_{[0,t)}|) + \rho_1'\big(|(\chi,\eta)|_{[0,t)}\big)+\rho_1''\big(|\bar\zeta|_{[0,t)}\big)
\end{aligned}
\end{equation}
with $\rho_1'(s):=\rho_1(2(2+|K|) s)$ and $\rho_1''(s):=\rho_1(2s)$. Thus, we conclude by    \eqref{bound:chi_eta} and \eqref{bound:x} that there exist $\beta_{x\chi\eta}$ of class-$KL$ and $\rho_3$ of class-$K$ such that every solution of $\Sigma_{cl}$ satisfies
\begin{equation}\label{bound:x_chi_eta}
|(x(t),\chi(t),\eta(t))|\le \beta_{x\chi\eta}(|(x(0),\chi(0),\eta(0))|,t) + \rho_0(|w|_{[0,t)}) + \rho_3(|\bar\zeta|\sn)
\end{equation}
for al $t\ge 0$. 

In view of \eqref{s:zeta} and \eqref{s:chi}, $\bar\zeta$ fulfills
\begin{equation}\label{s:bar_zeta}
\dot{\bar \zeta} =   \varrho (\eta,\chi,\bar\zeta) + q(w,x)+ B(w,x) u
\end{equation} 
where 
\begin{equation}\label{d:varrho}
\varrho (\eta,\chi,\bar\zeta)  := -K\Big((F+C_e^\top G_1C_e)\chi + H\zeta + C_e^\top (\eta_2 - G_1\eta_1)\Big).
\end{equation}
With $\ell>0$ a design parameter to be fixed, let in \eqref{s:eta}  
\begin{align}\label{d:u}
\cK_\xi&:=\ell K,&\cK_\zeta&:=-\ell I_{n_e}, &\cK_\eta&:=\ell K C_e^\top \,.
\end{align}
The fact that $\cK_\eta$ can be taken of the form \eqref{d:cK_eta}, with $\cK_\eta'$ invertible follows by the presence of $C_e^\top$ in \eqref{d:u}, and by the construction of $K$ in Lemma \ref{lem:kappa}.
In the new coordinates we then have
\begin{equation*}
u  =  \ell \cL\big(  K(\xi+ C_e^\top \eta_1) -\zeta     \big)   = -\ell\cL  \bar\zeta  
\end{equation*}
and thus \eqref{s:bar_zeta} yields
\begin{equation}\label{s:dot_bar_zeta}
\dot{\bar \zeta} = \varrho (\eta,\chi ,\bar\zeta)+q(w,x) -\ell B(w,x)\cL\bar\zeta .
\end{equation} 
We fix $\ell$ on the basis of the following Lemma, whose proof is in the Appendix.
\begin{lemma}\label{lem:zeta}
		Consider an equation of the form \eqref{s:dot_bar_zeta}, with $\varrho$ and $q$ locally Lipschitz. Under Assumption \ref{ass:P_L} there exist $\beta_{\bar\zeta}$ of class-$KL$ and, for each compact set $\bar X\subset\R^{\dx}$,   constants $a_4>0,\,\ell^\star_0(g,K)>0$ such that, for all $\ell>\ell^\star_0(g,K)$  the following holds
		\begin{equation}\label{e:bound_bar_zeta}
		|\bar\zeta(t)|\le \beta_{\bar\zeta}(|\bar\zeta(0)|,\ell t) + \dfrac{a_4}{\ell}\left( |(w,x,\chi,\eta)|_{[0,t)}\right),
		\end{equation}
		for all $t\in\Rplus$ such that $(w(s),x(s))\in W\x \bar X$ for all $s\in[0,t]$.
\end{lemma}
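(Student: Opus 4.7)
The plan is to construct a quadratic Lyapunov function using the matrix $\cP$ furnished by Assumption \ref{ass:P_L}, and then derive an ISS estimate for $\bar\zeta$ whose input gain scales like $1/\ell$. Concretely, I would set
\[
V(w,x,\bar\zeta) := \bar\zeta^\top \cP(w,x)\bar\zeta,
\]
and compute its derivative along the trajectories of the full closed-loop system, using \eqref{s:dot_bar_zeta} for $\dot{\bar\zeta}$. The crucial observation is that, by Assumption \ref{ass:P_L}b, the term $L^{(x)}_{b(w,x)u}\cP(w,x)$ vanishes identically, so $\dot \cP$ reduces to $L^{(w)}_s\cP + L^{(x)}_f\cP$, which is independent of the control input $u$. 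Since $\cP$ is $\cC^1$ and $s,f$ are locally Lipschitz, this expression is uniformly bounded on the compact set $W\x\bar X$; call the bound $c_\cP$.

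Expanding $\dot V$ and invoking Assumption \ref{ass:P_L}c yields the key inequality
\[
\dot V \le -\ell |\bar\zeta|^2 + c_\cP |\bar\zeta|^2 + 2\bar\lambda |\bar\zeta|\bigl(|\varrho(\eta,\chi,\bar\zeta)| + |q(w,x)|\bigr).
\]
Next I would bound $|\varrho|$ and $|q|$. From \eqref{d:varrho} and the definition of the coordinate change \eqref{1n:d:barzeta}, $\varrho$ is affine in $(\chi,\eta,\bar\zeta)$, with coefficients depending only on $F,H,C_e,G_1$ and $K$; hence there exists $c_1=c_1(g,K)>0$ such that $|\varrho|\le c_1(|\chi|+|\eta|+|\bar\zeta|)$. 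Since $q(0,0)=0$ and $q$ is locally Lipschitz, a constant $c_q$ exists with $|q(w,x)|\le c_q(|w|+|x|)$ on $W\x \bar X$. Collecting terms, and applying Young's inequality $2ab\le \frac{\alpha_\ell}{2}a^2 + \frac{2}{\alpha_\ell}b^2$ with $\alpha_\ell := \ell - c_\cP - 2\bar\lambda c_1$, gives
\[
\dot V \le -\frac{\alpha_\ell}{2}|\bar\zeta|^2 + \frac{c_2}{\alpha_\ell}|(w,x,\chi,\eta)|^2,
\]
for a suitable $c_2>0$ depending on $\bar\lambda, c_1, c_q$ but not on $\ell$.

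Choosing $\ell^\star_0(g,K)$ large enough so that $\alpha_\ell \ge \ell/2$ for all $\ell>\ell^\star_0(g,K)$, and using $\und\lambda|\bar\zeta|^2 \le V \le \bar\lambda|\bar\zeta|^2$, the estimate becomes $\dot V \le -\frac{\alpha_\ell}{2\bar\lambda}V + \frac{c_2}{\alpha_\ell}|(w,x,\chi,\eta)|^2$. A standard comparison argument then yields
\[
V(t)\le V(0)\exp\!\Bigl(-\tfrac{\alpha_\ell}{2\bar\lambda}t\Bigr) + \tfrac{2\bar\lambda c_2}{\alpha_\ell^2}|(w,x,\chi,\eta)|^2_{[0,t)},
\]
valid on the interval where $(w(s),x(s))\in W\x\bar X$. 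Taking square roots, using $\sqrt{a+b}\le \sqrt a + \sqrt b$ and $\alpha_\ell \ge \ell/2$, and absorbing constants into a class-$KL$ function $\beta_{\bar\zeta}$ and a constant $a_4$, produces exactly \eqref{e:bound_bar_zeta}. The main obstacle I anticipate is the careful bookkeeping required to ensure that the $\bar\zeta$-linear component of $\varrho$ (which is multiplied by $K$, hence dependent on $g$) is dominated by the $-\ell|\bar\zeta|^2$ term uniformly in $g$ after fixing $g$: this is what forces $\ell^\star_0$ to depend on $(g,K)$ and pins down the scaling of the time constant and input gain as $\ell t$ and $a_4/\ell$ respectively.
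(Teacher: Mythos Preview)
Your proposal is correct and follows essentially the same Lyapunov argument as the paper, the only cosmetic difference being that the paper works directly with the square-root function $V=\sqrt{\bar\zeta^\top\cP(w,x)\bar\zeta}$ (obtaining a linear differential inequality $D^+V\le(\alpha_2-\ell\alpha_1)V+\alpha_2|(w,x,\chi,\eta)|$) rather than the quadratic form followed by Young's inequality and a final square root. One point worth making explicit in your write-up is that, after enforcing $\alpha_\ell\ge\ell/2$, your resulting $\beta_{\bar\zeta}(s,\tau)=\sqrt{\bar\lambda/\und\lambda}\,s\,e^{-\tau/(8\bar\lambda)}$ depends only on $\und\lambda,\bar\lambda$ and not on $\bar X$; this independence is essential in the proof of Proposition~\ref{prop:boundedness} (where $\beta_{\bar\zeta}$ is used in \eqref{e:varpi1} before $\bar X$ is fixed) and the paper stresses it explicitly.
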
 

Pick a compact set of initial conditions for $(x,\eta)$ of the form $X\x H$ and let $\pi_1,\pi_2>0$ be such that $(w,x,\eta)\in W\x X\x H$ implies
\begin{align*}
|\chi| &= |\xi(w,x)+C_e^\top \eta| \le \pi_1\\
|\bar \zeta| &= |\zeta(w,x)-K(\xi(w,x)+C_e^\top \eta)|\le \pi_2.
\end{align*}   
Pick $\varpi_1>0$ so that
\begin{equation}\label{e:varpi1}
\varpi_1 \ge 1+ \max\left\{ \max_{(x,\eta)\in X\x H} |(x,\eta)|+ \pi_1 ,\  \sup_{(x,\eta)\in X\x  H} \beta_{x\chi\eta}( |(x,\eta)|+\pi_1,0) + \sup_{w\in W}\rho_0(|w|) + \rho_3\big(\beta_{\bar \zeta}(\pi_2,0)+1\big)\right\},
\end{equation}
{which can be fixed at this stage since, according to Lemma \ref{lem:zeta}, $\beta_{\bar \zeta}$ does not depend on $\bar X$.} 
With $\varpi_2>\varpi_1$  define
\begin{equation}\label{d:barX}
\bar X := \overline{\ball_{\varpi_2}}.
\end{equation}
let $\ell_0\sr(g,K)$ and $a_4$ be given by Lemma \ref{lem:zeta} for $\bar X$ chosen as in \eqref{d:barX}, and pick
\begin{equation*}
\ell>\ell\sr_1(g,K):=\max\left\{\ell\sr_0(g,K), a_4\left( \max_{w\in W}|w|+ \varpi_2 \right)\right\}.
\end{equation*}
Then, for any solution to $\Sigma_{cl}$ originating in $W\x X\x H$  it holds that
\begin{equation}\label{bound:barzeta_2}
|\bar\zeta(t)|\le \beta_{\bar \zeta}(\pi_2,0) + 1.
\end{equation}
for all $t\ge 0$ such that  $|(x,\chi,\eta)|\sn < \varpi_2$, as the latter condition implies $x(s)\in\bar X$ for all $s\in[0,t]$.

Now, suppose that a solution to $\Sigma_{cl}$ exists that originates in $W\x X\x H$ and for some $\bar t>0$ satisfies $|(x(\bar t),\chi(\bar t),\eta(\bar t))|>\varpi_2$. By definition of $\varpi_1$, $|x(0),\chi(0),\eta(0)|< \varpi_1<\varpi_2$, so that by continuity there exists $\bar s\in(0,\bar t)$ such that $|x(\bar s),\chi(\bar s),\eta(\bar s))|=\varpi_1$ and $|(x,\chi,\eta)|_{[0,\bar s)}\le \varpi_1$. However, by \eqref{bound:x_chi_eta}, \eqref{e:varpi1} and \eqref{bound:barzeta_2}, this implies that
\begin{equation*}
|\bar\zeta|_{[0,\bar s)} \le \beta_{\bar \zeta}(\pi_2,0)+1
\end{equation*}
and hence
\begin{align*} 
\varpi_1 &= |(x(\bar s),\chi(\bar s),\eta(\bar s))| \\
&\le \beta_{x\chi\eta}\big(|x(0),\chi(0),\eta(0)|,0\big) + \rho_0(|w|_{[0,\bar s)}) +  \rho_3(|\bar\zeta|_{[0,\bar s)}) \\
%&\le \sup_{(x,\eta)\in X\x H} \beta_{x\chi\eta}( |(x,\eta)|+\pi_1,0) + \sup_{w\in W}\rho_0(|w|) + \rho_3\big(|\bar\zeta|_{[0,\bar s)}\big)\\
&\le \sup_{(x,\eta)\in X\x H} \beta_{x\chi\eta}( |(x,\eta)|+\pi_1,0) + \sup_{w\in W}\rho_0(|w|) + \rho_3\big(\beta_{\bar \zeta}(\pi_2,0)+1\big)\\
&\le \varpi_1-1
\end{align*}
which is a contradiction. Thus, we claim that every solution to $\Sigma_{cl}$ originating in $W\x X\x H$ is complete and satisfies $(x(t),\chi(t),\eta(t))|\le \varpi_2$ and $|\bar\zeta(t)|\le \beta_{\bar \zeta}(\pi_2,0)+1$ for all $t\ge 0$.
As $\rho_3$ is locally Lipschitz, then this implies that there exists $a_5>0$ such that every trajectory originating in $W\x X\x H$ satisfies
\begin{equation*}%\label{bound:x_chi_eta_2}
|(x(t),\chi(t),\eta(t))|\le \beta_{x\chi\eta}(|(x(0),\chi(0),\eta(0))|,t) + \rho_0(|w|_{[0,t)}) + a_5 |\bar\zeta|\sn
\end{equation*}
for all $t\ge 0$.
In view of \eqref{e:bound_bar_zeta}, which holds for every $t\ge0$, standard small-gain arguments show that there exists $\beta_{cl}$ of class-$KL$ such that, with  $\rho_{cl}(\cdot)=\rho_0(\cdot)+1$ and for all
\begin{equation*}
\ell > \ell\sr(g,K) := \max\{ \ell_1\sr(g,K),\, a_4 a_5  \},
\end{equation*}
the claim \eqref{claim:prop1} of the proposition holds. \endproof

%%%%%%%%%%%%%%%%%%%%%%%%%%%%%%%%%

\subsection{Proof of Theorem \ref{thm:main}}\label{sec:pf_thm}

By Proposition \ref{prop:boundedness}, there exist $g\sr>0$, $K=K(g)\in\R^{p\x p(N-p)}$ and $\ell\sr(g,K)>0$ such that, for all $g>g\sr$ and $\ell>\ell\sr(K,g)$, the choices \eqref{d:Gi} and \eqref{d:u} guarantee that the claim \eqref{claim:prop1} of the proposition holds, and in particular that there exists a compact invariant attractor $\cA$ defined in \eqref{d:cA}.    
Pick now a solution $(w,x,\eta)$ to $\Sigma_{cl}^\cA$ and, define the signal $\eta\sr$ according to \eqref{d:etasr}, which under Assumptions \ref{ass:etastar} and \ref{ass:Dee} is well defined in view of Remark \ref{rmk:eta1}.  With reference to the coordinates \eqref{d:chi_components} and \eqref{1n:d:barzeta}, consider the  change of variables 
\begin{align*}
\bar e &\mapsto \tilde e := \bar e- \eta\sr_1\\
\eta &\mapsto \tilde \eta := \eta - \eta\sr\\
\chi &\mapsto \tilde\chi := \chi - C_e^\top \eta\sr_1 \\
\bar \zeta &\mapsto \tilde\zeta := \bar\zeta + KC_e^\top \eta\sr_1 = \zeta-K\tilde\chi.
\end{align*}
Then 
\begin{equation*}
\tilde e  = C_e\chi - \eta\sr_1 = C_e\tilde\chi ,
\end{equation*}
and, in view of \eqref{s:chi} and noting that  $FC_e^\top =0$ and $C_eC_e^\top =I_{n_e}$, we also have
\begin{equation}\label{s:tildechi}
\dot {\tilde \chi} = (F+C_e^\top G_1C_e+HK)\tilde \chi + H\tilde\zeta + C_e^\top (\tilde\eta_2-G_1\tilde\eta_1).
\end{equation}
With $\Delta(g):=\diag(I,gI,\dots, g^{d-1} I)$, let
\begin{equation*}
\tilde\mu:=\Delta(g)\inv\tilde\eta, 
\end{equation*}
the same arguments used in the proof of Proposition \ref{prop:boundedness} in dealing with $\mu$ can be used to show that $\tilde\mu$ satisfies
\begin{equation*}
\dot{\tilde\mu} = gM\tilde \mu + g^{1-d}E\big( \phi(\Delta(g)\tilde\mu+\eta\sr)-\eta\sr_{1}{}^{(d)}\big) + gRC_e\tilde\chi 
\end{equation*} 
with $M$, $E$ and $R$ defined as in \eqref{e:d_mu}.

Then, as $\phi$ satisfies \eqref{e:bound_phi}, for some $L_\phi>0$, it holds that
\begin{align*}
|\phi(\Delta(g) \tilde\mu+\eta\sr)-\eta\sr_1{}^{(d)}| &\le |\phi(\Delta(g)\tilde\mu+\eta\sr)-\phi(\eta\sr)|+|\phi(\eta\sr)-\eta\sr_1{}^{(d)}|\\
%& \le \min\{ 2 C_\phi, L_\phi g^{d-1}|\tilde\mu|\} + |\delta(w,x)|\\
&\le  L_\phi g^{d-1}|\tilde\mu|  + \bar\delta.
\end{align*}
where $\bar\delta$ is defined  in \eqref{d:bardelta}.

Standard high-gain arguments  can be thus used to prove that, for large enough $g$, the following estimate holds
\begin{equation}\label{bound_tmu} 
|\tilde \mu(t)|\le \beta_{\tilde\mu}(|\tilde\mu(0)|,t) + \dfrac{q_1}{g^d}   \bar\delta   + q_1 |\tilde e|_{[0,t)} .
\end{equation}
for some class-$KL$ function $\beta_{\tilde \mu}$ and some $q_2>0$. We can thus assume without loss of generality that $g$ is taken large enough to guarantee this. 

Since in view of \eqref{d:Gi}, $|G_1||\tilde\eta_1|+|\tilde\eta_2|\le q_2 g |\tilde\mu|$, for some $q_2>0$ independent from $g$, then
%-----------------------------------------------
%Then, by letting $\beta_{\tilde \eta} = g^{d-1}\beta_{\tilde\mu}$, and by using the fact that $|\tilde\eta_i|\le g^{i-1}|\tilde\mu_i|$ we also obtain 
%\begin{equation}\label{e:bound_teta}
%\begin{aligned}
%|\tilde\eta_i(t)|&\le \beta_{\tilde\eta}(|\tilde\eta(0)|,t) +   \dfrac{q_2}{g^{d+1-i}}  \sup_{(w,x)\in\cO}|\delta(w,x)|  + q_2g^{i-1} |\tilde e|_{[0,t)]} 
%%\\|\tilde\eta_1(t)|&\le \beta_{\tilde\eta}(|\tilde\eta(0)|,t) +   \dfrac{q_2}{g^d}   \sup_{(w,x)\in\cO}|\delta(w,x)| + q_2  |\tilde e|_{[0,t)}.
%\end{aligned}
%\end{equation}
%
in view of  Lemma \ref{lem:kappa} and \eqref{s:tildechi}, we obtain that $K$ can be chosen without loss of generality  so that there exist $\beta_{\tilde \chi}$ and $\beta_{\tilde e}$ of class-$KL$, $q_3>0$ and $\epsilon < (2q_1  g)\inv$ for which the following bounds hold
\begin{equation}\label{e:bound_tchi}
\begin{aligned}
&|\tilde\chi^e(t)|\le \beta_\chi( |\tilde\chi(0)|,t) + q_3\Big( |\tilde \zeta^e|_{[0,t)} +  g|\tilde\mu|\sn\Big)\\
&|\tilde e(t)| \le  \tilde \beta_{e}(| \tilde\chi(0)|,t) + \epsilon\Big( |\tilde \zeta^e|_{[0,t)}+ g|\tilde\mu|\sn\Big)  ,
\end{aligned} 
\end{equation}
in which $\tilde\chi^e$ and $\tilde\zeta^e$ are defined as in \eqref{d:xie}.
Therefore, by small-gain arguments~\cite{Jiang1994} \eqref{bound_tmu} and \eqref{e:bound_tchi} imply the following bound 
\begin{equation}\label{e:bound_tmu_tchi}
%|(\tilde\mu(t),\tilde \chi^e(t))|\le \beta_{\tilde\eta\tilde \chi^e}(|(\tilde\eta(0),\tilde \chi (0))|,t) + q_4 |\tilde\zeta^e|_{[0,t)}   + q_5 g^{-d} \sup_{(w,x)\in\cO}|\delta(w,x)|,
\limsup_{t \to \infty}|(\tilde\mu(t),\tilde \chi^e(t))|\le  q_4\limsup_{t \to \infty}  |\tilde\zeta^e(t)|    + q_4 g^{1-d}\bar\delta,
\end{equation}
for some  $q_4>0$.

In view of \eqref{d:u}, and since
\begin{equation*}%\label{e:xizeta_tilde}
\begin{aligned}
\xi&=\chi-C_e^\top \eta_1 = \tilde\chi -C_e^\top \tilde\eta_1 , &
\zeta &= \bar\zeta+K\chi = \tilde\zeta +K\tilde\chi ,
\end{aligned}
\end{equation*}
we obtain
%\begin{align*}
%u &= \cL\Big(\cK_\xi\xi+\cK_\zeta\zeta+\cK_\eta\eta_1 \Big) =\cL\Big( \cK_\xi(\tilde\chi-C_e^\top \tilde\eta_1)+\cK_\zeta(\tilde\zeta+K \tilde\chi )+\cK_\eta(\tilde\eta_1+\eta\sr_1(w,x))   \Big) \\
%&=\cL\Big( \cK_\xi(\tilde\chi-C_e^\top \tilde\eta_1) +\cK_\zeta(\tilde\zeta+K\tilde\chi)+\cK_\eta\tilde\eta_1 \Big)  + \cL\Big(\cK_\xi\lambda_\xi(w,x)+\cK_\zeta\lambda_\zeta(w,x)+\cK_\eta\upsilon_1(w,x)\Big)\\
%&=\ell\cL\Big( K(\tilde\chi-C_e^\top \tilde\eta_1) -I_{n_e}(\tilde\zeta+K\tilde\chi)+KC_e^\top \tilde\eta_1 \Big)  + \cL\Big(\cK_\xi\lambda_\xi(w,x)+\cK_\zeta\lambda_\zeta(w,x)+\cK_\eta\upsilon_1(w,x)\Big)\\
%&= \ell\cL\tilde\zeta + \cL\Big(\cK_\xi\lambda_\xi(w,x)+\cK_\zeta\lambda_\zeta(w,x)+\cK_\eta\upsilon_1(w,x)\Big). 
%\end{align*}
\begin{align*}
u &= \cL\Big(\cK_\xi\xi+\cK_\zeta\zeta+\cK_\eta\eta_1 \Big) =\cL\Big( \cK_\xi(\tilde\chi-C_e^\top \tilde\eta_1)+\cK_\zeta(\tilde\zeta+K \tilde\chi )+\cK_\eta(\tilde\eta_1+\eta\sr_1(w,x))   \Big)  = -\ell\cL\tilde\zeta + \cL\cK_\eta\eta\sr_1(w,x).
\end{align*}
In view of the definition of $\eta_1\sr$ in \eqref{e:eq_eta1_star}, and of the block-diagonal structure of $K$ in Lemma \ref{lem:kappa}, by letting
\begin{equation*}
Q:=\begin{pmatrix}
I_{\de} \\ 0_{ (\dy-\de)\x \de}
\end{pmatrix},
\end{equation*} we thus obtain
\begin{align*}
\dot{\tilde\zeta}^e &=  \varrho^e(\tilde\chi^e,\tilde\zeta^e,\tilde\eta) + q^e(w,x) + Q^\top B(w,x)\Big(-\ell \cL \tilde\zeta + \cL\cK_\eta\eta\sr_1(w,x)\Big) = \varrho^e(\tilde\chi^e,\tilde\zeta^e,\tilde\eta)  -\ell Q^\top B(w,x)\cL Q \tilde\zeta^e \\&= \varrho^e(\tilde\chi^e,\tilde\zeta^e,\tilde\eta)- \ell D^{e,e}(w,x) \tilde\zeta^e
\end{align*}
with $\varrho^e$ a linear map that, in view of the lower-triangular structure of $F$ and of the diagonal structure of $H$, only depends on $(\tilde\chi^e,\tilde\zeta^e,\tilde\eta)$ and satisfies $\varrho^e(\tilde\chi^e,\tilde\zeta^e,\tilde\eta)=Q^\top \varrho(\tilde\eta,\tilde\chi,\tilde\zeta)$, with $\varrho$   defined in \eqref{d:varrho}.
 We now observe that, since $\rho_{cl}$ in the claim of Proposition \ref{prop:boundedness} does not depend on $\ell$, then   in view of Assumption \eqref{ass:Dee} and on the fact that $(w(t),x(t))\in\cO$ implies the existence of $\und\lambda',\bar\lambda'>0$ such that $\und\lambda' I \le \cM(w,x)\le \bar\lambda' I$,   arguments similar to those of Lemma \ref{lem:zeta} applied  with $q=0$ and $\bar X=\{x\in\R^\dx\suchthat (w,x)\in\cO\}$ to the Lyapunov candidate
\[
V_e(w,x):=\sqrt{\tilde\zeta^e{}^\top \cM(w,x)\tilde\zeta^e} ,
\]
 can be used to show that for large enough $\ell$, $\tilde\zeta^e$ satisfies
\begin{equation*}
%|\tilde\zeta^e(t)|\le \beta_{\tilde\zeta^e}(|\tilde\zeta^e(0)|,t) + \dfrac{q_6}{\ell} |(\tilde\chi^e,\tilde\eta)|_{[0,t)}
\limsup_{t \to \infty}|\tilde\zeta^e(t)|\le  \dfrac{q_5 g}{\ell} \limsup_{t \to \infty} |(\tilde\chi^e(t),\tilde\mu(t))|
\end{equation*}
for some $q_5>0$ independent from $\ell$ and $g$. Hence, in view of \eqref{e:bound_tmu_tchi}, and since $\ell$ can be taken without loss of generality larger enough so that
\begin{equation}\label{e:ell}
\ell> 2 q_5 q_4 g, 
\end{equation}
then we obtain
\begin{align*}
\limsup_{t \to \infty}|(\tilde \mu(t),\tilde\chi^e(t))|&\le \dfrac{q_4q_5 g}\ell \limsup_{t \to \infty}|(\tilde \mu(t),\tilde\chi^e(t))| + q_4 g^{1-d} \bar\delta\\
\limsup_{t \to \infty}|\tilde\zeta^e(t)|&\le \dfrac{q_4q_5g}\ell \limsup_{t \to \infty}|\tilde\zeta^e(t)| + \dfrac{q_4q_5g}\ell g^{1-d} \bar\delta.
\end{align*}
As \eqref{e:ell} implies both 
\begin{align*}
\dfrac 1 2 &< 1-\dfrac{q_4q_5 g}{\ell}, &  \dfrac{q_4q_5 g}{\ell} &< \dfrac{1}{2},
\end{align*}
then we obtain
\begin{equation}\label{e:limsups}
\begin{aligned}
\limsup_{t \to \infty}|(\tilde \mu(t),\tilde\chi^e(t))|&\le 2\left(1-\dfrac{q_4q_5 g}\ell \right) \limsup_{t \to \infty}|(\tilde \mu(t),\tilde\chi^e(t))| \le   2q_4 g^{1-d}  \bar\delta\\
\limsup_{t \to \infty}|\tilde\zeta^e(t)|&\le 2\left(1- \dfrac{q_4q_5g}\ell \right)\limsup_{t \to \infty}|\tilde\zeta^e(t)| \le  g^{1-d} \bar\delta.
\end{aligned} 
\end{equation}

Now,  pick a point $(\bar w,\bar x,\bar \eta)$ in $\cA$.
Define arbitrarily a sequence $\{t_n\}_{n\in\N}$ of positive scalars $t_n$ satisfying $t_n\to \infty$.
 As $\cA$ is invariant, it is backwards invariant. Hence, for each $n\in\N$, we can find a solution $(w^n,x^n,\eta^n)$ to $\Sigma_{cl}^\cA$ which satisfies
\begin{equation*}
(w^n(t_n),x^n(t_n),\eta^n(t_n))=(\bar w,\bar x,\bar \eta).
\end{equation*}  
In view of \eqref{e:limsups}, which holds for any solution of $\Sigma_{cl}^\cA$  and thus in particular for each $(w^n,x^n,\eta^n)$, it follows that for each $\varepsilon>0$, there exists $N(\varepsilon)\in\N$ such that the quantities $\tilde\mu^n:=\Delta(g)\inv ( \eta^n- \eta\sr(w^n,x^n)  )$ and $\tilde\chi^e{}^n:=\chi(w^n,x^n)-C_e^\top \eta_1\sr(w^n,x^n)$ satisfy
\begin{equation*}
|\tilde\chi^e{}^n(t_n)|+|\tilde\mu^n(t_n)|\le 4 q_4 g^{1-d} \bar\delta + \varepsilon
\end{equation*}
for all $n\ge N(\varepsilon)$. Let $\bar e:=h_e(\bar w,\bar x)$ denote the regulation error computed at $(\bar w,\bar x,\bar \eta)$. Then for each $n\in\N$ it satisfies
\begin{align*}
|\bar e| =   |C_e\tilde\chi^n(t_n) -C_eC_e^\top \tilde\mu_1^n(t_n)| \le |\tilde \chi^e{}^n(t_n)|+|\tilde\mu^n(t_n)|,
\end{align*} 
so that, for all $\varepsilon>0$, by taking $n\ge N(\varepsilon)$, we obtain
\[
|\bar e|\le 4 q_4 g^{1-d} \bar\delta + \varepsilon.
\]
For the arbitrariness of $\varepsilon$ and $(\bar w,\bar x,\bar \eta)\in\cA$ we thus conclude that
\begin{equation*}
|e|=|h_e(w,x)|\le 4 q_4 g^{1-d}\bar\delta
\end{equation*}
holds for all $(w,x)\in\cO$. Then the claim \eqref{e:bound_e} of the theorem follows from the  continuity of $h_e$ and uniform attractiveness of $\cA$ from $W\x X\x H$ by assuming, without loss of generality, $g>\sqrt[d-1]{4q_4/\varepsilon}$.  \endproof

%%%%%%%%%%%%%%%%%%%%%%%%%%%%%%%%%%%%%%%%
%

%\section{Proof of Proposition 1}\label{sec:proof}
%We develop the proof in 3 sections. In the first we prove uniform boundedness of the closed-loop system (i.e. point 1 of the proposition), in the second we prove that the trajectories are uniformly attracted by an $\varepsilon$ neighborhood of $\cO$ (i.e. point 2 of the proposition) and in the third section we prove the bound \eqref{1n:e:lime} (that is point 3 of the proposition).\\[1em]
%\textbf{I. Uniform boundedness:}\\
%\\
%[1em]
%%
%%
%\noindent\textbf{II. Existence of a Steady State:}\\
%Equation \eqref{e:bound_all_C} implies that, for each $H\subset\R^{\deta}$ compact, the $\Omega$-limit set $\Omega_{\Sigma_{cl}}(W\x X\x H)$ of the closed-loop system \eqref{s:wx}, \eqref{s:eta} is compact, non-empty, uniformly attractive from $W\x X\x H$ and invariant. Moreover, we can chose $H$ and $\ell$ such that $\cD:=\Omega_{cl}(W\x X\x H)\subset W\x X\x H$, so that $\cD$ is also asymptotically stable. Furthermore,   in view of \eqref{e:bound_all_C}, given any $\varepsilon>0$, then for large enough $\ell$ (say $\ell>\ell^\star_4(g)\ge \ell^\star_3(g)$) we have 
%\[
%|(w(t),x(t),\eta(t))|_\cC \le \beta_{\cC}(|(w(0),x(0),\eta(0))|_\cC,t) + \varepsilon,
%\]
%so that also point 2 of the proposition holds. \\[1em]
%%
%%
%\noindent\textbf{III. Asymptotic Bound:} \\

%%%%%%%%%%%%%%%%%%%%%%%%%%%%%%%%%%%%%%%%%%%%%%%%%%%%%%%%%%%%%%%

\section{On the Controllability Assumptions}\label{sec:P}
In this section, we complement Assumptions  \ref{ass:P_L} and \ref{ass:Dee}, by showing how they are implied by many  state-of-art assumptions routinely used in the context of high-gain stabilization and regulation of multivariable systems, and thus also showing how the matrix $\cL$ can be constructed in the respective frameworks  using only quantities that are known. In the following, we assume that $B(w,x)$ is $C^1$ and, for ease of notation, we let $\xb:=(w,x)$.  
\subsection{Strong Invertibility in the Sense of Wang and Isidori 2015 Implies Assumption \ref{ass:P_L}} \label{sec:P:1}
Here we prove that the assumption of invertibility used, for instance, in  recent papers by Wang, Isidori and Su~\cite{Wang2015} and by Wang, Isidori, Su and Marconi~\cite{Wang2016}  implies Assumption \ref{ass:P_L}.
\begin{lemma} \label{lem:P:1}
	Suppose that $\du=\dy$ (i.e., $B(\xb)$ is square), $B(\xb)$ is bounded, $L_{g(\xb)u}^{(x)}B(\xb)=0$ for all $\xb \in  W\x\R^\dx$, and there exists $\epsilon>0$ such that all its principal minors $\Delta_i(\xb)$, $i=1,\dots,r$ satisfy
	\begin{equation}\label{e:pp_Delta_i_ge_eps}
	|\Delta_i(\xb)|\ge \epsilon,
	\end{equation}   for all $\xb\in W\x\R^\dx$. Then,  Assumption \ref{ass:P_L} holds.
\end{lemma}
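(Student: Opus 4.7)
Since condition~(b) of Assumption~\ref{ass:P_L} is automatic whenever $\cP$ does not depend on $x$, the plan is to exhibit a \emph{constant}, positive-definite $\cP$ together with a constant, square, full-rank $\cL$ satisfying (a) and (c); the derivative condition then holds trivially.

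\emph{Step 1 (Sign reduction).} Each $\Delta_i(\xb)$ is continuous, never zero, and uniformly bounded away from zero, so its sign $\sigma_i\in\{\pm 1\}$ is independent of $\xb$ on $W\x\R^\dx$ (component-wise, if the set is not connected). Put $\sigma_0:=1$ and let $S:=\diag(s_1,\ldots,s_r)$ with $s_i:=\sigma_i\sigma_{i-1}$. A telescoping product yields that the leading principal minors of $\tilde B(\xb):=B(\xb)S$ satisfy $\tilde\Delta_i(\xb)=\sigma_i\Delta_i(\xb)=|\Delta_i(\xb)|\ge\epsilon$. Moreover, if constant $\tilde P,\tilde\cL$ verify Assumption~\ref{ass:P_L} for $\tilde B$, then $\cP:=\tilde P$ and $\cL:=S\tilde\cL$ verify it for $B$ (since $\cL^\top B^\top\cP+\cP B\cL=\tilde\cL^\top\tilde B^\top\tilde P+\tilde P\tilde B\tilde\cL$), and $\cL$ remains full-rank. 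Hence I may assume from now on that $\Delta_i(\xb)\ge\epsilon>0$ for all $i$ and $\xb$.

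\emph{Step 2 (Nested high-gain construction).} I would take $\cP:=P=\diag(p_1,\ldots,p_r)$ and $\cL:=\Lambda=\diag(\lambda_1,\ldots,\lambda_r)$ both constant and positive, with ``nested'' magnitudes such as $\lambda_i=\lambda^{r-i+1}$ for a sufficiently large parameter $\lambda>0$. The boundedness of $B$ together with $|\Delta_i|\ge\epsilon$ yields a uniform lower bound on the pivots of the LDU factorization $B(\xb)=\hat L(\xb)D(\xb)\hat U(\xb)$: namely, $|D_{ii}(\xb)|=|\Delta_i(\xb)/\Delta_{i-1}(\xb)|\ge \epsilon/(C M^{r-1})$, where $M:=\sup_{\xb}|B(\xb)|<\infty$ and $C$ is a combinatorial constant, while $\hat L,\hat U$ remain uniformly bounded. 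Choosing $\lambda$ large enough (and $p_i$ suitably, e.g.\ constant or mildly geometric) makes the diagonal part of $M(\xb):=\Lambda B(\xb)^\top P+PB(\xb)\Lambda$ dominate all cross terms uniformly in $\xb$, yielding $M(\xb)\ge I$. This can be organized as a clean induction on $r$ via Schur complements, with the uniform pivot bound ensuring the inductive step.

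\emph{Step 3 (Verification).} Condition~(a) holds with $\und\lambda:=\min_i p_i$ and $\bar\lambda:=\max_i p_i$; condition~(b) is automatic since $\cP=P$ is constant in $x$; condition~(c) is precisely the outcome of Step~2. The matrix $\cL=S\Lambda$ is diagonal with nonzero entries and hence full-rank.

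\emph{Main obstacle.} The crux is the \emph{uniform} positive-definiteness estimate in Step~2: pointwise diagonal dominance is routine once the gains are fixed, but one must track the constants carefully to exhibit a single $\lambda$ (and $P$) that handles every $\xb\in W\x\R^\dx$ simultaneously. This uniformity is exactly what the hypotheses $|\Delta_i(\xb)|\ge\epsilon$ and $|B(\xb)|\le M$ are designed to supply, through the uniform lower bound on the LDU pivots.
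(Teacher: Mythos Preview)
Your Step~2 has a genuine gap. With both $P$ and $\Lambda$ diagonal and positive, the diagonal entries of $M(\xb)=\Lambda B(\xb)^\top P + PB(\xb)\Lambda$ are $M_{ii}=2\lambda_i p_i\, B_{ii}(\xb)$. But positivity of the leading principal minors of $B$ (even after your sign reduction in Step~1) does \emph{not} force $B_{ii}(\xb)>0$ for $i\ge 2$. Take the constant matrix
\[
B=\begin{pmatrix}1&1\\-2&-1\end{pmatrix},
\]
which has $\Delta_1=\Delta_2=1>0$, so Step~1 leaves it unchanged; yet $B_{22}=-1$. Then $M_{22}=-2\lambda_2 p_2<0$ for every choice of positive $\lambda_2,p_2$, so $M$ is never positive semidefinite, let alone $\ge I$. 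The Schur-complement induction you allude to cannot help: the Schur complement of the $(1,1)$ block equals $M_{22}-M_{12}^2/M_{11}\le M_{22}<0$. Thus no positive diagonal $P,\Lambda$ work, and the ``diagonal part dominates the cross terms'' heuristic fails at the outset because the diagonal part itself has the wrong sign. The uniform pivot bound you extract from the LDU factorization controls $D_{ii}=\Delta_i/\Delta_{i-1}$, not $B_{ii}$, and these are different quantities for $i\ge 2$.

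The paper takes a different route and lets $\cP$ depend on $\xb$. It invokes the factorization $B(\xb)=E\,M(\xb)\,(I+U(\xb))$ with $E$ a constant diagonal sign matrix, $M(\xb)=M(\xb)^\top$ uniformly positive definite, and $U(\xb)$ strictly upper triangular and bounded. Setting $\cP(\xb)=E\,M(\xb)^{-1}E$ and $\cL=EC$ with $C=\diag(c^{\dy-1},\ldots,c,1)$, the $M(\xb)^{-1}$ cancels the symmetric factor and reduces condition~(c) to $E\big[(I+U(\xb))C+C(I+U(\xb))^\top\big]E\ge I$, which does follow from a diagonal-dominance argument for $c$ large (here the diagonal is $2C$, genuinely positive). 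This is precisely why the hypothesis $L_{g(\xb)u}^{(x)}B(\xb)=0$ appears in the lemma: it is needed to verify condition~(b) for an $\xb$-dependent $\cP(\xb)=EM(\xb)^{-1}E$; your constant-$\cP$ plan would have rendered that hypothesis superfluous, which is already a hint that something is being lost.
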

\begin{proof}
It can be shown~\cite{Wang2015} that if \eqref{e:pp_Delta_i_ge_eps} holds and $B(\xb)$ is bounded, then $B(\xb)$ can be written as \[B(\xb)= EM(\xb)(I+U(\xb)),\] with $M(\xb)=M(\xb)^\top$ positive definite for all $\xb\in  W\x\R^\dx$ and satisfying $a_1 I\le M(\xb)\le a_2 I$ for all $\xb\in W\x\R^\dx$ and for some $a_1,\,a_2>0$, $U(\xb)$ a strictly upper triangular matrix and with $E$ a diagonal matrix satisfying $EE=I$. Moreover, simple computations~\cite{Wang2015}  show that, if $B(\xb)$ is bounded, then there exists $c>1$ such that, with $C:=\diag(c^{\dy-1},c^{\dy-2},\dots,c,1)$, we have
	\[
	(I+U(\xb))C+C(I+U(\xb))^\top \ge I.
	\]
	Let
	\begin{align*}
	\cL&:=EC, & \cP(\xb)&:=EM(\xb)\inv E.
	\end{align*}
	Since   $a_1I \le M(\xb)  \le a_2 I$ holds at each $\xb\in W\x\R^\dx$, then the eigenvalues of $M(\xb)\inv$ are lower and upper-bounded by $a_2\inv$ and $a_1\inv$ respectively, and hence point a of Assumption \ref{ass:P_L} holds. Furthermore, we observe that  $\cL=\cL^\top =EC=CE$. Hence, noting that 
	\begin{equation*}
	B(\xb)^\top EM(\xb)\inv= (I+U(\xb))^\top M(\xb)^\top E^\top EM(\xb)\inv = (I+U(\xb))^\top, 
	\end{equation*}
	 we then have  
	\begin{align*}
	\cL^\top  B(\xb)^\top \cP(\xb)+\cP(\xb)B(\xb)\cL &= ECB(\xb)^\top  E M(\xb)\inv E + E M(\xb)\inv E B(\xb) CE= 
	E\Big[C (I+U(x))^\top   +  (I+U(x))C\Big]E \ge I,
	\end{align*}
	for all $\xb\in W\x\R^\dx$. Thus, since $EE=I$, point c of Assumption \ref{ass:P_L} holds. 
	
	Finally, we observe that~\cite{Wang2015} $L_{g(\xb)u}^{(x)}B(\xb)=0$ implies $L_{g(\xb)u}^{(x)}M(\xb) = 0$. Since 
	\[L_{g(\xb)u}^{(x)}M(\xb)\inv = -M(\xb)\inv L_{g(\xb)u}^{(x)}M(\xb)M(\xb)\inv=0,\] 
	then also point b holds, hence the result.
\end{proof} 

\subsection{Strong invertibility in the sense of Wang et al. 2015 and 2017  implies Assumption \ref{ass:P_L} } 
Here we prove that the assumption of invertibility used, for instance, in the works of Wang et al.~\cite{Wang2015uncB,Wang2017} implies Assumption~\ref{ass:P_L}.
\begin{lemma} 
	Suppose that $\du=\dy$ (i.e., $B(\xb)$ is square) and that there exist a non singular matrix $M\in\R^{\du\x \du}$ and a constant $\delta_0\in(0,1)$ such that
	\begin{equation}\label{e:pp_apdx:lemmaA52:hyp}
	\max_{ \substack{\Lambda\in\R^{\du\x \du}\\|\Lambda|\le 1}} \left| \Big(B(\xb)-M\Big)\Lambda M\inv \right|\le \delta_0
	\end{equation}
	holds for all $\xb\in W\x\R^\dx$. Then, Assumption \ref{ass:P_L} holds.
\end{lemma}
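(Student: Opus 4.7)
The plan is to exhibit explicit choices of $\cP$ and $\cL$ directly from the data $M$ and $\delta_0$, and verify the three conditions of Assumption~\ref{ass:P_L} by a small-perturbation argument around the constant matrix $M$. Specifically, I would try the constant choice
\[
\cP(\xb) := I_{\dy}, \qquad \cL := \frac{1}{2(1-\delta_0)}\, M^{-1},
\]
noting that $\cL$ is full rank because $M$ is non-singular and $\delta_0<1$. Points \emph{a} and \emph{b} of Assumption~\ref{ass:P_L} are then immediate: condition \emph{a} holds with $\und\lambda=\bar\lambda=1$, and condition \emph{b} holds trivially because $\cP$ is constant.

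The heart of the proof is condition \emph{c}. The idea is to exploit the hypothesis \eqref{e:pp_apdx:lemmaA52:hyp} by specializing to $\Lambda = I$ (which has unit norm), yielding
\[
\left|(B(\xb)-M)M^{-1}\right| \le \delta_0 \qquad \forall \xb\in W\x\R^\dx.
\]
Writing $B(\xb)M^{-1} = I + (B(\xb)-M)M^{-1}$ and taking the symmetric part gives
\[
B(\xb)M^{-1} + (M^{-1})^\top B(\xb)^\top \;\ge\; 2I - 2\delta_0 I \;=\; 2(1-\delta_0)I,
\]
since the symmetric part of any matrix is bounded below by minus its spectral norm (in the Loewner order). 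Multiplying through by the scalar $1/(2(1-\delta_0))>0$ gives exactly $\cL^\top B(\xb)^\top\cP(\xb) + \cP(\xb)B(\xb)\cL \ge I$, which is condition \emph{c}.

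I do not anticipate a real obstacle here: the main conceptual step is recognizing that the hypothesis \eqref{e:pp_apdx:lemmaA52:hyp} is a contractivity estimate that makes $BM^{-1}$ a small-norm perturbation of the identity, so a \emph{constant} $\cP=I$ already suffices and there is no need to mimic the more delicate $LDU$-type factorization used in Lemma~\ref{lem:P:1}. The only care required is checking that the maximum over $|\Lambda|\le 1$ in \eqref{e:pp_apdx:lemmaA52:hyp} does indeed cover $\Lambda=I$ (yielding the scalar bound $|(B-M)M^{-1}|\le\delta_0$), and that the symmetric-part inequality used above is valid in the Loewner order, both of which are standard.
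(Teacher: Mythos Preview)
Your proof is correct and follows essentially the same route as the paper: both specialize \eqref{e:pp_apdx:lemmaA52:hyp} to $\Lambda=I$, obtain $|B(\xb)M^{-1}-I|\le\delta_0$, and conclude $M^{-\top}B(\xb)^\top+B(\xb)M^{-1}\ge 2(1-\delta_0)I$ via the symmetric-part bound. The only cosmetic difference is that the paper places the scalar $1/(2(1-\delta_0))$ in $\cP$ (taking $\cL=M^{-1}$, $\cP=I/(2(1-\delta_0))$) whereas you place it in $\cL$; the verification of condition~\emph{c} is literally the same computation.
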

\begin{proof}
	As \eqref{e:pp_apdx:lemmaA52:hyp} holds for all $\Lambda\in\R^{\du\x\du}$ satisfying $|\Lambda|\le 1$, it holds in particular for $\Lambda=I$, thus yielding
	\[
	|B(\xb) M\inv - I |\le \delta_0.
	\]
	Thus, for all $\pb\in\R^\du$ and $\xb\in W\x\R^\dx$, it holds that
	\begin{equation*}
	2\pb^\top \Big( I-B(\xb)M\inv \Big)\pb  \le |2\pb^\top (B(\xb)M\inv-I)\pb|\le 2|\pb|^2\cdot |I-B(\xb)M\inv| \le 2\delta_0 |\pb|^2 =\pb^\top (2\delta_0 I)\pb.
	\end{equation*}
	Therefore, we obtain
	\begin{equation*}
	\pb^\top \Big( 2I-M^{-T}B(\xb)^\top -B(\xb)M\inv  \Big)\pb  = 2\pb^\top \pb - 2\pb^\top B(\xb)M\inv \pb  = 2\pb\Big(I-B(\xb)M\inv\Big)\pb \le \pb^\top (2\delta_0 I)\pb. 
	\end{equation*}
	As it holds for all $\pb\in\R^\du$ and $\xb\in W\x\R^\dx$ then, necessarily
	\[
	2I - M^{-T}B(\xb)^\top -B(\xb)M\inv \le 2\delta_0 I,
	\]
	and  thus, letting $\cL:=M\inv$ and $\cP(\xb):=I/(2(1-\delta_0))$ yields the item c of Assumption \ref{ass:P_L}. Point a instead follows by noting that $\delta_0\in(0,1)$, while point $b$ is straightforward as $\cP$ is constant in $\xb$. 
\end{proof}

\subsection{Positivity and negativity in the sense of McGregor, Byrnes and Isidori 2006, Back 2009, and Astolfi, Isidori, Marconi and Praly 2013 imply Assumption \ref{ass:P_L}}
Finally, here we show that the ``negativity'' and ``positivity'' assumptions on $B$, made in some recent papers~\cite{McGregor2006,Back2009,Astolfi2013}, all imply Assumption \ref{ass:P_L}. 
The following lemma   refers to the positivity assumption (Assumption 1) of~\cite{Astolfi2013}.
\begin{lemma}\label{lem:pp_pos_cond}
	Suppose that $\du=\dy$ (i.e. $B(\xb)$ is square) and that there exists $K\in\R^{\dy\x \dy}$ such that the following positivity condition holds 
	\begin{equation}\label{e:pp_pos_cond}
	B(\xb)K+K^\top B(\xb)^\top  \ge  I
	\end{equation}
	for all $\xb\in W\x\R^\dx$. Then Assumption \ref{ass:P_L} holds.
\end{lemma}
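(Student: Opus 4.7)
The plan is extremely direct: take $\cP$ to be a constant multiple of the identity and $\cL$ to be the matrix $K$ provided by the hypothesis, and then verify the three items of Assumption \ref{ass:P_L} by inspection.

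More precisely, I would set
\[
\cL := K, \qquad \cP(\xb) := I,
\]
with $\und\lambda = \bar\lambda = 1$. Item a of Assumption \ref{ass:P_L} then holds trivially since $I \le \cP(\xb) = I \le I$ uniformly on $W\x\R^\dx$. Item b also holds immediately because $\cP$ is constant in $\xb$, so its partial derivative with respect to $x$ vanishes identically, and consequently $L_{b(\xb)u}^{(x)}\cP(\xb)=0$ for every $u\in\R^\du$ and every $\xb$. Item c is the only nontrivial condition, and substituting $\cL=K$ and $\cP=I$ reduces it to
\[
\cL^\top B(\xb)^\top \cP(\xb) + \cP(\xb) B(\xb)\cL = K^\top B(\xb)^\top + B(\xb) K \ge I,
\]
which is precisely the positivity hypothesis \eqref{e:pp_pos_cond} of the lemma.

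There is effectively no obstacle here: the positivity condition \eqref{e:pp_pos_cond} has been tailored exactly to match item c of Assumption \ref{ass:P_L} with $\cP$ constant, and the square-system assumption $\du=\dy$ guarantees that $K\in\R^{\du\x \dy}$ has the correct dimensions required by Assumption \ref{ass:P_L}. The rank condition on $\cL$ required by Assumption \ref{ass:P_L} (full rank) is implicitly ensured by \eqref{e:pp_pos_cond}, since $K^\top B(\xb)^\top + B(\xb)K \ge I$ forces $K$ (and $B(\xb)$) to be nonsingular: if $Kv = 0$ for some nonzero $v$, then $v^\top(K^\top B(\xb)^\top + B(\xb)K)v = 0 < |v|^2$, contradicting the lower bound. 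Thus the proof is a one-line verification, and no further machinery is needed.
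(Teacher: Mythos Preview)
Your proposal is correct and matches the paper's own proof essentially line for line: the paper also sets $\cL=K$ and $\cP(\xb)=I$, observes that \eqref{e:pp_pos_cond} forces $K$ to be invertible by exactly the argument you gave, and concludes.
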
  
\begin{proof}
We first observe that equation \eqref{e:pp_pos_cond} implies that $K$ is invertible. In fact, if we suppose that $K$ is singular, then there exists $\pb\ne 0$ satisfying $K\pb=0$. However, this yields $\pb^\top(B(\xb)K+K^\top B(\xb))\pb = 0$, which contradicts \eqref{e:pp_pos_cond}. 
%	Pre-multiplying by $K^{-\top}$ and post-multiplying by $K\inv$ \eqref{e:pp_pos_cond} yields
%\begin{equation*}
%K^{-\top} B(\xb)^\top + B(\xb) K\inv \ge K^{-\top}K\inv \ge \kappa I
%\end{equation*}
%in which $\kappa>0$ is the smallest singular value of $K\inv$. 
Then, Assumption \ref{ass:P_L} is satisfied by simply letting  $\cL=K$  and $\cP(\xb)=I$.  
\end{proof}
The following lemma, instead, refers to a slightly relaxed version of Assumption 4.4 of~\cite{McGregor2006}, which is, however, implied by that assumption in each compact subset of $W\x\R^\dx$.
\begin{lemma}\label{lem:pp_neg_cond}
	Suppose that $\du=\dy$ (i.e. $B(\xb)$ is square) and that there exists $M\in\R^{\dy\x \dy}$ and $\kappa>0$ such that the following negativity condition holds 
	\begin{equation}\label{e:pp_neg_cond}
	B(\xb)M+M^\top B(\xb)^\top  < -\kappa I
	\end{equation}
	for all $\xb\in W\x\R^\dx$. Then, Assumption \ref{ass:P_L} holds.
\end{lemma}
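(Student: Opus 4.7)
The plan is to mirror the argument given for Lemma \ref{lem:pp_pos_cond}, exploiting the fact that the negativity condition \eqref{e:pp_neg_cond} is, up to a sign and a scaling factor, the positivity condition \eqref{e:pp_pos_cond}. Specifically, rewriting \eqref{e:pp_neg_cond} as
\[
B(\xb)(-M/\kappa) + (-M/\kappa)^\top B(\xb)^\top > I,
\]
for all $\xb \in W\x\R^\dx$, suggests setting $\cL := -M/\kappa$ and $\cP(\xb) := I$.

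With this choice, point (a) of Assumption \ref{ass:P_L} holds trivially with $\und\lambda = \bar\lambda = 1$, and point (b) is immediate since $\cP$ is constant in $\xb$. Point (c) then follows directly from the displayed inequality above.

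The only remaining check is that $\cL$ is full rank, i.e.\ that $M$ is invertible. I would argue as in the proof of Lemma \ref{lem:pp_pos_cond}: if $M$ were singular there would exist $\pb \ne 0$ with $M\pb = 0$, but then $\pb^\top (B(\xb)M + M^\top B(\xb)^\top)\pb = 0$, contradicting the strict negativity $\pb^\top(B(\xb)M + M^\top B(\xb)^\top)\pb \le -\kappa |\pb|^2 < 0$ from \eqref{e:pp_neg_cond}. There is no real obstacle here; the result reduces to an algebraic sign flip and a normalization, analogous to Lemma \ref{lem:pp_pos_cond}.
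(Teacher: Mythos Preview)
Your proposal is correct and matches the paper's proof exactly: the paper simply observes that \eqref{e:pp_neg_cond} implies \eqref{e:pp_pos_cond} with $K=-M/\kappa$ and invokes Lemma~\ref{lem:pp_pos_cond}, which is precisely what you do (you just spell out the verification of Assumption~\ref{ass:P_L} with $\cL=-M/\kappa$, $\cP=I$ rather than citing the lemma).
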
 
\begin{proof}
	The proof follows by Lemma \ref{lem:pp_pos_cond} by noticing that \eqref{e:pp_neg_cond} implies \eqref{e:pp_pos_cond} with $K=-M/\kappa$.
\end{proof}
Finally, the following lemma concerns a slightly relaxed version of Assumption 3 of~\cite{Back2009}, which is equivalent to that assumption in each compact set.
\begin{lemma} Suppose that $\du=\dy$ (i.e. $B(\xb)$ is square), and assume that there exist $\kappa>0$, a non singular matrix $K$, $G^-:=\diag(g_1^-,\dots,g_{m}^-)$, and $G^+:=\diag(g_1^+,\dots,g_m^+)$ such that $0<G^-<G^+$ and that
	\begin{equation}\label{e:pp_back_c}
	\Big(B(\xb) K \pb-G^-\pb\Big)^\top \Pi^2\Big(B(\xb)K\pb-G^+\pb\Big)\le -\kappa \pb^\top\pb,
	\end{equation}
	for all $\pb\in\R^{\du}$ and all $\xb\in W\x\R^\dx$ and where $\Pi:=2(G^++G^-)\inv$. Then Assumption \ref{ass:P_L} holds.
\end{lemma}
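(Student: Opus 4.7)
My plan is to take $\cL := K$ and a constant $\cP$ proportional to $\Pi$, so that conditions a and b of Assumption \ref{ass:P_L} are automatic, and only condition c requires work. Specifically, since $\Pi = 2(G^++G^-)\inv$ is diagonal and strictly positive (as $G^\pm>0$), the choice $\cP := \Pi/\kappa$ is a constant symmetric positive-definite matrix, trivially satisfying $\und\lambda I\le\cP\le\bar\lambda I$ on $W\x\R^\dx$, and $L_{b(\xb)u}^{(x)}\cP=0$.

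To verify condition c, abbreviate $M(\xb) := B(\xb)K$. The hypothesis \eqref{e:pp_back_c} says that the quadratic form of $(M-G^-)^\top \Pi^2 (M-G^+)$ is bounded above by $-\kappa I$, hence so is its symmetric part. Expanding and using repeatedly the commutativity of all diagonal matrices $G^+,G^-,\Pi$, the symmetric part works out to
\[
M^\top \Pi^2 M \;-\; \tfrac{1}{2}\bigl(M^\top \Pi^2(G^++G^-) + (G^++G^-)\Pi^2 M\bigr) \;+\; G^+\Pi^2 G^-.
\]
The key identity $\Pi(G^++G^-)=2I$ collapses the middle term to $M^\top \Pi + \Pi M$, yielding the inequality
\[
M^\top\Pi^2 M \;-\; (M^\top\Pi + \Pi M) \;+\; G^+\Pi^2 G^- \;\le\; -\kappa I.
\]
Rearranging and discarding the two positive semidefinite terms $M^\top \Pi^2 M \ge 0$ and $G^+\Pi^2 G^-\ge 0$ gives $K^\top B(\xb)^\top \Pi + \Pi B(\xb) K \ge \kappa I$ for every $\xb\in W\x\R^\dx$. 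Dividing by $\kappa$ yields exactly $\cL^\top B^\top\cP + \cP B\cL \ge I$, which is condition c.

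There is no real obstacle here; the proof is essentially an algebraic manipulation. The only point that requires care is the observation that $G^\pm$ and $\Pi$ all commute (they are diagonal), so that $\Pi^2(G^++G^-) = 2\Pi$, which is what makes the cross terms telescope into the sought expression $M^\top \Pi + \Pi M$. Once that identity is spotted, the inequality follows by simply dropping the non-negative terms from the bound. \endproof
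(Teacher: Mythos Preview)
Your proof is correct and follows essentially the same algebraic route as the paper: both expand the quadratic form in \eqref{e:pp_back_c}, exploit the identity $\Pi(G^++G^-)=2I$ to collapse the cross terms, and arrive at $K^\top B(\xb)^\top \Pi + \Pi B(\xb)K \ge \kappa I$. Your conclusion is actually more direct than the paper's, which detours through Lemma~\ref{lem:pp_neg_cond} via the substitution $M=-K\Pi^{-1}$, whereas you simply read off $\cL=K$ and $\cP=\Pi/\kappa$ from the key inequality.
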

\begin{proof}
	Equation \eqref{e:pp_back_c} implies ($G^-=(G^-)^\top $)
	\[
	-K^\top B(\xb)^\top  \Pi^2 G^+ - G^-\Pi^2 B(\xb) K+ K^\top B(\xb)^\top  \Pi^2 B(\xb)K + G^- \Pi^2 G^+   \le -\kappa I,
	\]
	that in turn implies
	\begin{equation*}
	\cM(\xb):= K^\top B(\xb)^\top  \Pi^2 G^+ + G^-\Pi^2 B(\xb) K > \kappa I
	\end{equation*}
	for all $\xb\in W\x\R^\dx$. Noting that 
	\begin{align*}
	\Pi^2 G^+ &= \Pi\cdot 2 (G^++G^-)\inv G^+ =  \Pi\cdot 2 (G^++G^-)\inv (G^++G^--G^-) = 2\Pi - \Pi^2 G^-\\
	G^- \Pi^2   &= 2\cdot G^-(G^++G^-)\inv \Pi = 2\Pi - G^+\Pi^2 
	\end{align*}
	then
	\begin{equation*}
	\cM(\xb) = 2 \Big(K^\top B(\xb)^\top \Pi + \Pi B(\xb)K\Big) - \cM(\xb)^\top 
	\end{equation*}
	Thus, $\cM(\xb)>\kappa I$ implies
	\[
	K^\top B(\xb)^\top \Pi + \Pi B(\xb)K = \dfrac{1}{2} \Big( \cM(\xb)+\cM(\xb)^\top \Big) >\kappa I,
	\]
	and by letting  $M= -K\Pi\inv$ we obtain
	\[
	M^\top B(\xb)^\top + B(\xb) M <-\kappa \Pi^2,
	\]
	and the claim follows from Lemma \ref{lem:pp_neg_cond} by noting that $\Pi^2$ is positive definite.
\end{proof}

\subsection{About Assumption \ref{ass:Dee}}
In this section, we show that Assumption \ref{ass:Dee} holds in all the frameworks reported in the previous sections. Regarding the framework of Section \ref{sec:P:1}, we observe that if $B(\xb)$ is bounded, $L_{g(\xb)u}^{(x)} B(\xb)=0$, and $|\Delta_i(\xb)|\ge \epsilon>0$ for all $\xb\in W\x\R^\dx$ and all $i=1,\dots, \dy$, then the same properties apply for the upper-left submatrix $B^{e,e}(\xb)\in\R^{\de\x \de}$ defined by eliminating the last $\dy-\de$ rows and columns of $B(\xb)$. Thus, according to Lemma \ref{lem:P:1}, there exist $\cL^{e}$ and $\cP^{e}(\xb)$ satisfying the same properties of Assumption \ref{ass:P_L} for the reduced matrix $B^{e,e}(\xb)$. Since by Lemma \ref{lem:P:1} the result for the full-size $B(\xb)$ can be obtained with a matrix $\cL$ of the form $\cL=\diag(c^{\dy-\de}\cL^{e},\cL^{a})$, for some $c>1$ and some $\cL^{a}\in\R^{(\dy-\de)\x (\dy-\de)}$, then in view of \eqref{d:D_K}, $D^{e,e}(\xb)=c^{\de-\dy}\cL^{e} B^{e,e}(\xb)$, and it is readily seen that $\cL$ can be chosen so that Assumptions \ref{ass:P_L} and \ref{ass:Dee} hold simultaneously with $\cM(\xb)=\cP^{e}(\xb)$.

Regarding all the other frameworks presented above, the fact that also Assumption \ref{ass:Dee} holds is a consequence of the following Lemma.
\begin{lemma}
Suppose that $\du=\dy$ and Assumption \ref{ass:P_L} holds  $\cP$   of the form\begin{align*}
\cP(\xb)&= \diag(\cP^e(\xb),\cP^a(\xb))
\end{align*}
for some $\cP^e(\xb)\in\R^{\de\x \de}$ and $\cP^a(\xb)\in\R^{(\dy-\de)\x(\dy-\de)}$. Then, Assumption \ref{ass:Dee} holds.
\end{lemma}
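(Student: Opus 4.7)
The plan is to show that the choice $\cM(w,x) := \cP^e(w,x)$ satisfies all three items of Assumption \ref{ass:Dee}. Since $\cP$ is $\cC^1$ on $\R^\dw \x \R^\dx \supset \cO$ and block-diagonal with $\cP^e$ as its upper-left block, $\cM$ inherits $\cC^1$ regularity on an open set including $\cO$. Moreover, the bound $\und\lambda I \le \cP(w,x) \le \bar\lambda I$ on $W\x\R^\dx$, tested against vectors of the form $(v,0)$, yields $\und\lambda I_{\de} \le \cP^e(w,x) \le \bar\lambda I_{\de}$, so point (a) of Assumption \ref{ass:Dee} is fulfilled (in fact, uniformly on $W\x\R^\dx$). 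Point (b) is equally immediate: since the Lie derivative operator acts entrywise and $\cP = \diag(\cP^e,\cP^a)$, the hypothesis $L^{(x)}_{b(w,x)u}\cP(w,x) = 0$ forces $L^{(x)}_{b(w,x)u}\cP^e(w,x) = 0$.

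The only nontrivial point is (c). The key observation is that when $\cP$ is block-diagonal, the $(e,e)$-block of $\cL^\top B(w,x)^\top \cP(w,x) + \cP(w,x) B(w,x)\cL$ reduces to $D^{e,e}(w,x)^\top \cP^e(w,x) + \cP^e(w,x)D^{e,e}(w,x)$, because in the product $\cP D$ the factor $\cP^e$ multiplies only the top block row of $D$, and similarly for $D^\top \cP$. Explicitly, using the partition \eqref{d:D_K} together with $\cP = \diag(\cP^e,\cP^a)$, one gets
\begin{equation*}
D^\top \cP + \cP D = \begin{pmatrix} D^{e,e,\top}\cP^e + \cP^e D^{e,e} & D^{a,e,\top}\cP^a + \cP^e D^{e,a}\\ D^{e,a,\top}\cP^e + \cP^a D^{a,e} & D^{a,a,\top}\cP^a + \cP^a D^{a,a}\end{pmatrix}.
\end{equation*}

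To conclude, I would invoke the elementary principal-submatrix fact: if a symmetric matrix $M\in\R^{\dy\x \dy}$ satisfies $M \ge I_{\dy}$, then for any vector $v_e \in \R^{\de}$, evaluating the quadratic form on $\tilde v := \col(v_e,0)$ gives $v_e^\top M^{e,e} v_e = \tilde v^\top M \tilde v \ge |\tilde v|^2 = |v_e|^2$, hence $M^{e,e} \ge I_{\de}$. Applying this with $M := \cL^\top B(w,x)^\top \cP(w,x) + \cP(w,x) B(w,x)\cL$, which is $\ge I$ on $W\x\R^\dx$ by Assumption \ref{ass:P_L}(c), yields $D^{e,e}(w,x)^\top \cP^e(w,x) + \cP^e(w,x) D^{e,e}(w,x) \ge I_{\de}$ on $W\x\R^\dx$, and in particular on $\cO$. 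I do not foresee any substantial obstacle; the entire argument is just the observation that block-diagonality of $\cP$ makes the $(e,e)$-block of the Lyapunov-type inequality decouple, combined with the standard principal-submatrix inheritance of positivity.
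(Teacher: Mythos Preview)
Your proposal is correct and follows essentially the same approach as the paper: both set $\cM:=\cP^e$, observe that items (a) and (b) are inherited from the block-diagonal structure of $\cP$, and for (c) compute the $(e,e)$-block of $D^\top\cP+\cP D$ to be $D^{e,e,\top}\cP^e+\cP^e D^{e,e}$, then conclude via the principal-submatrix argument (testing $M\ge I$ against vectors $(v_e,0)$). The paper's proof is nearly identical, simply writing the off-diagonal and lower blocks as unspecified $M_1,M_2$ rather than computing them explicitly as you do.
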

\begin{proof}
The proof directly follows by noticing that points a and b of Assumption \ref{ass:P_L} imply the same properties on $\cP^e(\xb)$, while point c, which by definition of $D(\xb)$ is equivalent to
\begin{equation*}
D(\xb)^\top \cP(\xb)+\cP(\xb)D(\xb) \ge I,
\end{equation*}
implies
\begin{equation}\label{e:P:in1}
\begin{pmatrix}
D^{e,e}(\xb)^\top\cP^e(\xb) + \cP^e(\xb) D^{e,e}(\xb)  & M_1(\xb) \\
M_1(\xb)^\top & M_2(\xb)
\end{pmatrix} \ge I
\end{equation}
for some properly defined $M_1$ and $M_2$. Pick $\pb\in\R^{\de}$ arbitrary. Then, \eqref{e:P:in1} implies
\begin{equation*} 
  |\pb|^2\le \begin{pmatrix}
\pb\\ 0
\end{pmatrix}^\top \begin{pmatrix}
D^{e,e}(\xb)^\top\cP^e(\xb) + \cP^e(\xb) D^{e,e}(\xb)  & M_1(\xb) \\
M_1(\xb)^\top & M_2(\xb)
\end{pmatrix}\begin{pmatrix}
\pb\\ 0
\end{pmatrix}  = \pb^\top \Big(D^{e,e}(\xb)^\top\cP^e(\xb) + \cP^e(\xb) D^{e,e}(\xb)\Big)\pb 
\end{equation*}
and thus Assumption \ref{ass:Dee} holds with $\cM(\xb):=\cP^e(\xb)$.
\end{proof}

\section{Conclusion}
In this paper, we have proposed a postprocessing regulator  for multivariable nonlinear systems possessing a partial normal form. Contrary to previous approaches, the proposed regulator can  handle   additional measurements that need not to vanish at the steady state but that can be useful for stabilization purposes or, for instance, to fit into the minimum-phase requirement. The proposed approach can also handle a control input dimension of arbitrarily large size, provided that a controllability assumption is fulfilled. About the controllability requirement, we have shown that it is implied by many state-of-art assumption in the literature of high-gain stabilization and regulation of normal forms. Among the drawbacks of the proposed approach, we underline how the stabilization phase is still strongly based on a   ``high-gain'' perspective, thus allowing us to restrict to linear stabilizers, and how the proposed conditions to obtain asymptotic regulation may be not constructive in general. 

Future research directions spread in many ways. In the first place, we aim to introduce adaptation in the control loop, so as to cope with model uncertainties and to weaken the chicken-egg dilemma that links the choice of the stabilizer gains and the internal model dynamics, as envisioned in~\cite{Bin2018b}. 
A second research direction concerns the extension of the framework to include more general stabilizing laws, going beyond the linear high-gain technique and trying to relax the minimum-phase requirement.

%%%%%%%%%%%%%%%%%%%%%%%%%%%%%%%%%%%
\appendix
\subsection{Proof of Lemma 1}\label{apd:lem1} 
	Pick $i\in \{1,\dots,r\}$ and, with $k_i>0$, define the matrix $\Lambda_i(k_i):=\diag( k_i^{N_i-2}I_{p_i},k_i^{N_i-3}I_{p_i},\dots,k_iI_{p_i},I_{p_i})\in\R^{p_i(N_i-1)\x p_i(N_i-1)}$ and change coordinates as
	\begin{equation*}
	\chi^i\mapsto  z^i:= \Lambda_i(k_i) \chi^i\,.
	\end{equation*}
	With reference to the matrices defined in \eqref{s:xi}-\eqref{s:y_xizeta}, noting that
	\begin{align*}
	 \Lambda_i(k_i) F_{ii}\Lambda_i(k_i)\inv &= k_i F_{ii},&
	\Lambda_i(k_i) H_{ii} &= H_{ii},&
	 \Lambda_i(k_i) C_i^\top  G_1 C_i \Lambda_i(k_i)\inv&= C_i^\top  G_1 C_i, & 
	 \Lambda_i(k_i) C_i^\top &=k_i^{N_i-2} C_i^\top 
	\end{align*}
	then $z^i$ fulfills
	\begin{equation*} 
	\dot{z}^i  = k_i  F_{ii}z^i +   H_{ii}\zeta_i + \Lambda_i(k_i)\sum_{j=1}^{i-1}\big( F_{ij}\Lambda_j(k_j)\inv z^j + H_{ij}\zeta_j  \big)  +  C_i^\top G_1C_i z^i + k_i^{N_i-2} C_i^\top  (\eta_2-G_1\eta_1) 
	\end{equation*}
	Let $(\alpha^i_1,\dots,\alpha^i_{N_i-1})$ be such that the polynomial $\lambda^{N_i-1}+\alpha_{N_i-1}^i\lambda^{N_i-2}+\dots+\alpha_{1}^i$ has  roots with negative real part and, with 
	\[D^i := \begin{pmatrix}
	-\alpha_1^i I_{p_i} & \cdots -\alpha_{N_i-1}^iI_{p_i}
	\end{pmatrix}\]
	and $\bar\zeta^i\in\R^{p_i}$, let 
	\begin{equation*}
	\zeta^i = \bar \zeta^i + k_i D^i z^i \,.
	\end{equation*}
	Since $F_{ii}+H_{ii}D^i$ is Hurwitz, then there exists $k^\star>1$ such that, for any $k_i>k^\star$, the following holds
	\begin{equation}\label{e:lem_bound_txi}  
	|z^i(t)|\le b_1 e^{-b_2k_it}|z^i(0)| +  b_3k^{N_i-3}_i\Big(  |\bar\zeta_{[1,i]} |_{[0,t)}+ |\upsilon|_{[0,t)}\Big)  + b_4k^{N_i-3}_i\sum_{j=1}^{i-1}   k_j\int_0^te^{-b_2k_i(t-\tau)}  |z^j(\tau)|  d\tau  
	\end{equation}
	for some $b_1,b_2,b_3,b_4>0$, and in which we let for convenience $\upsilon:=|G_1||\eta_1|+|\eta_2|$, $\bar\zeta_{[1,i]}:=(\bar\zeta_1,\dots,\bar\zeta_i)$. We can partition $\bar e$ as $\bar e=\col(\bar e^1,\dots,\bar e^{r_e})$, with  $\bar e^i:=k_i^{2-N_i}C_i z^i$. Pick $\epsilon>0$ and pick $i\in\{2,\dots,r\}$. Let for convenience  $\bar k_i:=\max_{1\le j<i}k_j$ and suppose that, for each $j=1,\dots,i-1$, $z^j(t)$ fulfills
	\begin{equation}\label{e:iss_j}
	|z^j(t)|\le h_1^i(\bar k_i)e^{-\frac{b_2}{2}  k_1 t}|z(0)|  + h_2^i(\bar k_i)\Big(|\bar\zeta_{[1,j]} |_{[0,t)} + |\upsilon|_{[0,t)}\Big),
	\end{equation}
	for some $h_1^i(\bar k_i),h_2^i(\bar k_i)\ge 0$. Then, if $k_i>\bar k_i>k_1$, for each $\tau\le t$ we have
		\begin{equation*}
		e^{-b_2 k_i (t-\tau)}\le e^{-b_2 k_1(t-\tau)}\le e^{-\frac{b_2}{2}k_1 t} e^{b_2 k_1\tau},
		\end{equation*}	
and  equations \eqref{e:lem_bound_txi}-\eqref{e:iss_j} give 
	\begin{equation}\label{e:pp_bound_zi}
	\begin{aligned} 
	|z^i(t)|&\le b_1 e^{-b_2  k_1t}|z(0)| +  b_3k^{N_i-3}_i\Big( |\bar\zeta_{[1,i]} |_{[0,t)}+ |\upsilon|_{[0,t)}\Big)\\&\qquad\qquad  + b_4k^{N_i-3}_i(i-1)\bar k_i\bigg( e^{-b_2 k_1 t} \int_0^te^{\frac{b_2}{2} k_1 \tau}  h_1^i(\bar k_i)|z(0)|d\tau  + \int_0^te^{-b_2k_i(t-\tau)} h_2^i(\bar k_i)\big(|\bar\zeta_{[1,i]} |_{[0,t)}+ |\upsilon|_{[0,\tau)} \big) d\tau \bigg)  
	\\&\le \left( b_1 + \dfrac{ 2b_4 r \bar k_i h_1^i(\bar k_i) k_i^{N_i-3} }{b_2 k_1}\right)  e^{-\frac{b_2}{2}  k_1t} |z(0)| + \left( b_3 + \dfrac{ b_4 r \bar k_i h_2^i(\bar k_i) }{b_2k_i}\right) k_i^{N_i-3}\big(|\bar\zeta_{[1,i]} |_{[0,t)}+|\upsilon|_{[0,t)}\big).
	\end{aligned}
	\end{equation}
	Hence, if
	\begin{equation}\label{d:lem_ki}
	k_i >\max\left\{ k^\star,\; \bar k_i,\;\dfrac{r}{\epsilon}\left( b_3+ \dfrac{b_4r\bar k_i h_2^i(\bar k_i)}{b_2}\right)\right\},
	\end{equation}
	then, in view of \eqref{e:pp_bound_zi}, the fact that \eqref{e:iss_j} holds for $j=1,\dots,i-1$ implies that the same bounds also hold  for $j=1,\dots,i$, with $\bar k_{i+1}=\max\{\bar k_i,k_i\}=k_i$ and 
	\begin{align*}
	h_1^{i+1}(\bar k_{i+1})&:=\  h_1^i(\bar k_{i+1})+ b_1 + \dfrac{ 2b_4 r   h_1^i(\bar k_{i+1}) \bar k_{i+1}^{N_i-2} }{b_2k_1} ,&
	h_2^{i+1}(\bar k_{i+1}) & :=   h_2^i(\bar k_{i+1})+\dfrac{r} {\epsilon} \bar k_{i+1}^{N_i-2} .
	\end{align*}
	Moreover, noting that  $|\bar e^i(t)|\le k_i^{2-N_i}|z^i(t)|$, in view of \eqref{e:pp_bound_zi},  we have
	\begin{equation}\label{ineq:lem_tilde_e_i}
	|\bar e^i(t)|  \le  q_1^i(\bar k_i,k_i) e^{-\frac{b_2}{2} k_1 t}|z(0)| + \dfrac{\epsilon}{r} \big(|\bar\zeta^e|_{[0,t)}+|\upsilon|_{[0,t)}\big)
	\end{equation}
	with $q_1^i(\bar k_i,k_i)  := h_1^{i+1}(\bar k_i) k_i^{2-N_i}$  and $\bar\zeta^e:=\bar\zeta_{[1,r_e]}$. Fix $k_1$ so that
	\begin{equation*}
	k_1 \ge \max\left\{ k^\star, \dfrac{b_3 r}{\epsilon}\right\}\,.
	\end{equation*}
	Since $e^{-b_2k_1t}\le e^{-b_2 k_1 t/2}$ for all $k_1,t>0$ and $\bar k_2=k_1$, then \eqref{e:lem_bound_txi} implies \eqref{e:iss_j} for $j<2$ with $h_1^2(\bar k_2)=b_1$ and $h_2^2(\bar k_2)= b_3 \bar k_2^{N_1-3}$  and \eqref{ineq:lem_tilde_e_i} for $i=1$, with   $q_1^1:=b_1k^{2-N_1}$, $q_2^1:=b_2k_1$.  
	 Hence the sequence \eqref{d:lem_ki} is well-defined and by induction we conclude that, for each $i=1,\dots, r$, it holds that
	\begin{align*}
	|z^i(t)|&\le \bar h_1(\bar k_{i+1}) e^{-\frac{b_2}{2} k_1 t}|z(0)| + \bar h_2(\bar k_{i+1}) \Big(|\bar\zeta_{[1,i]} |_{[0,t)} + |\upsilon|_{[0,t]}\Big)
	\end{align*} 
		with $\bar h_1 := h_1^{r+1}$, $\bar h_2:= h_2^{r+1}$, and for each $i=1,\dots,r_e$, \eqref{ineq:lem_tilde_e_i} holds true.
		Noting that $|\chi^i(t)|\le |z^i(t)|$, $|z(0)|\le k^{N_{r}-1}|\chi(0)|$ and $|\bar e(t)|\le \sum_{i=1}^{r_e} |\bar e^i(t)|$, then we obtain \eqref{e:lemma_k_claim_iss}-\eqref{e:lemma_k_claim_oiss}, with
	\begin{align*}
	\beta_\chi(s,\tau) &:= k^{N_{r}-1} \bar h_1(\bar k_{r+1})  \exp\left(-\frac{b_2 b_3 r  \tau}{2}\right) s, & a_1&:= k^{N_{r}-1} \bar h_2(\bar k_{r+1}) ,& \beta_{\bar e}(s,t/\epsilon) &:=s\sum_{i=1,\dots, r_e}  q_1^i(\bar k_i,k_i) \exp\left(-\frac{b_2 b_3 r  \tau}{2}\right) 
	\end{align*} 
	and the claim of the lemma follows with
	\begin{equation*}%\label{m:d:lem_K}
	K := \diag\Big(
	k_1 D^1 \Lambda_1(k_1),  \  \ldots, \  k_{r} D^{r} \Lambda_{r}(k_{r})\Big).
	\end{equation*} \endproof

\subsection{Proof of Lemma 2}
With $\cP$ given by Assumption \ref{ass:P_L},  define the function
\begin{equation}
V(w,x) = \sqrt{\bar\zeta^\top \cP(w,x)\bar\zeta }
\end{equation}
By point a of Assumption \ref{ass:P_L} $V$ satisfies
\[
\sqrt{\und\lambda} |\bar\zeta| \le V(w,x)\le \sqrt{\bar\lambda}|\bar\zeta|
\]
for all $(w,x)\in W\x \R^\dx$. 
Taking the Dini derivative of $V$ along the solutions of the closed-loop system yields
\begin{align*}
D^+V(w,x) = \dfrac{1}{2V(w,x)} \Big(&-\ell \bar\zeta^\top \Big(\cL^\top B(w,x)^\top \cP(w,x)+ \cP(w,x)B(w,x)\cL  \Big)\bar\zeta\\
&+2\bar\zeta^\top \cP(w,x)\big(\varrho(\eta,\chi,\bar\zeta)+q(w,x)\big) \\&+ \bar\zeta\big( L_s^{(w)}\cP(w,x) + L_f^{(x)} \cP(w,x) +  L_{b(w,x)u}^{(x)}\cP(w,x) \big) \bar\zeta \Big).
\end{align*}
Point c of Assumption \ref{ass:P_L} implies
\begin{equation*}
\bar\zeta^\top \Big(\cL^\top B(w,x)^\top \cP(w,x)+ \cP(w,x)B(w,x)\cL  \Big)\bar\zeta \ge |\bar\zeta|^2
\end{equation*} 
so as 
\[
-\ell \bar\zeta^\top \Big(\cL^\top B(w,x)^\top \cP(w,x)+ \cP(w,x)B(w,x)\cL  \Big)\bar\zeta\le -\ell |\bar\zeta|^2.
\]
%Let $\bar q(w,x)$ be any function that agrees with $q$ on $\cA$ and that satisfies $|q(w,x)|\le \sup_{(w,x)\in\cA}|q(w,x)|$ for all $(w,x)\in\R^\dw\x\R^\dx$. Adding and subtracting $2\bar\zeta^\top \cP(w,x)\bar q(w,x)$ yields
%\begin{equation*}
%2 \bar\zeta^\top \cP(w,x)\big(\rho(\eta,\chi,\bar\zeta)+q(w,x)\big) 
%=2\bar\zeta^\top \cP(w,x)\big(\rho(\eta,\chi,\bar\zeta)+\tilde q(w,x) +\bar q(w,x)\big)  
%\end{equation*}
%with $\tilde q:=q-\bar q$. 
As $\cP$ is continuous, $\varrho$ is Lipschitz and $q$ is Lipschitz on $W\x \bar X$, then there exists $M_1>0$ such that
\begin{equation*}
2 \bar\zeta^\top \cP(w,x)\big(\varrho(\eta,\chi,\bar\zeta)+q(w,x)\big) \le M_1|\bar\zeta|\Big(  |\bar\zeta| + |(w,x,\chi,\eta)|  \Big),
\end{equation*}
as long as $(w,x)\in W\x \bar X$. Point b of Assumption \ref{ass:P_L} implies that $L_{b(w,x)u}^{(x)}\cP(w,x) =0$, so as by continuity of $\cP$, as long as $(w,x)\in W\x \bar X$, we can write
\begin{equation*}
\bar\zeta\big( L_s^{(w)}\cP(w,x) + L_f^{(x)} \cP(w,x) +  L_{b(w,x)u}^{(x)}\cP(w,x) \big) \bar\zeta\le M_2 |\bar\zeta|^2
\end{equation*}
for some $M_2>0$. Since
\[
\dfrac{1}{\sqrt{\bar\lambda}}\le \dfrac{|\bar\zeta|}{V(w,x)} \le \dfrac{1}{\sqrt{\und\lambda}} ,
\]
then there exists $\alpha_2>0$ such that, as long as $(w,x)\in W\x \bar X$, we have
\begin{equation*}
D^+V(w,x) \le (\alpha_2-\ell\alpha_1) V(w,x) + \alpha_2 |(w,x,\eta,\chi)| ,
\end{equation*}
with $\alpha_1:=1/(2\sqrt{\bar \lambda})$. The result thus follows with $\beta_{\bar\zeta}(s,\tau):= (s \sqrt{\bar\lambda}/\sqrt{\und\lambda}) \exp(-\alpha_1\tau/2)$ and $a_4:=2\alpha_2/(\alpha_1\sqrt{\und\lambda})$ by taking    $\ell^\star_0:= 2\alpha_2/\alpha_1$ and noticing that $\beta_{\bar \zeta}$ does not depend on $\bar X$.\endproof

\bibliographystyle{ieeetran}
\bibliography{bibliography}

\end{document}